\theoremstyle{plain}
\newtheorem{theo}{Theorem}
\newtheorem{lem}[theo]{Lemma}
\newtheorem{corollary}[theo]{Corollary}
\title{Sequences of formation width $4$ and alternation length $5$}
\date{}
\author{Jesse Geneson \qquad Peter Tian\\
\small Department of Mathematics\\[-0.8ex]
\small MIT\\[-0.8ex]
\small Massachusetts, U.S.A.\\
\small\tt geneson@math.mit.edu\\
\small\tt tianpeter15@yahoo.com
}
\begin{document}
\maketitle

\begin{abstract}
Sequence pattern avoidance is a central topic in combinatorics. A sequence $s$ {\it contains} a sequence $u$ if some subsequence of $s$ can be changed into $u$ by a one-to-one renaming of its letters. If $s$ does not contain $u$, then $s$ {\it avoids} $u$. A widely studied extremal function related to pattern avoidance is $Ex(u, n)$, the maximum length of an $n$-letter sequence that avoids $u$ and has every $r$ consecutive letters pairwise distinct, where $r$ is the number of distinct letters in $u$. 

We bound $Ex(u, n)$ using the formation width function, $fw(u)$, which is the minimum $s$ for which there exists $r$ such that any concatenation of $s$ permutations, each on the same $r$ letters, contains $u$. In particular, we identify every sequence $u$ such that $fw(u)=4$ and $u$ contains $ababa$. The significance of this result lies in its implication that, for every such sequence $u$, we have $Ex(u, n) = \Theta(n \alpha(n))$, where $\alpha(n)$ denotes the incredibly slow-growing inverse Ackermann function. We have thus identified the extremal function of many infinite classes of previously unidentified sequences.

  \bigskip\noindent \textbf{Keywords:} alternations, formations, generalized Davenport-Schinzel sequences, inverse Ackermann functions, permutations
\end{abstract}

\section{Introduction}

Pattern avoidance in sequences is a widely applicable topic in combinatorics. The field was inititated in 1965 by Davenport and Schinzel \cite{3}, who introduced sequences avoiding certain patterns to study linear differential equations. Specifically they introduced Davenport-Schinzel Sequences, which avoid alternations of two letters. More precisely, $u_1u_2 \cdots u_m$ is a Davenport-Schinzel sequence of order $s$ if it satisfies 

\begin{itemize}

\item $u_i \not= u_{i+1}$ for each index $i<m$
\item There do not exist indices $1 \le i_1 < \cdots < i_{s+2} \le m$ such that $u_{i_1}=u_{i_3}=\cdots=a$ and $u_{i_2}=u_{i_4}=\cdots=b$, for some integers $a \not= b$.

\end{itemize}

Upper bounds on the lengths of Davenport-Schinzel sequences have been used to bound the complexity of lower envelopes of sets of polynomials of limited degree \cite{3} and the complexity of faces in arrangements of arcs with a limited number of crossings \cite{1}. 

We can define Davenport-Schinzel sequences in a more intuitive way using the idea of sequence pattern avoidance. A sequence $s$ {\it contains} a sequence $u$ if some subsequence of $s$ can be changed into $u$ by a one-to-one renaming of its letters; we say such a subsequence is isomorphic to $u$. If $s$ does not contain $u$, then $s$ {\it avoids} $u$. The sequence $s$ is called {\it r-sparse} if any $r$ consecutive letters in $s$ are pairwise different. Thus Davenport Schinzel sequences of order $s$ correspond to $2$-sparse sequences which avoid an alternation $abab \cdots$ of length $s+2$. 

An important question in pattern avoidance is finding the maximum length of any sequence that avoids a given pattern. If $u$ is a sequence with $r$ distinct letters, then the extremal function $Ex(u, n)$ is the maximum length of any $r$-sparse sequence with $n$ distinct letters that avoids $u$. It is clear that $Ex(u, n) \geq n$ if $u$ has at least one letter that occurs twice. Moreover by the pigeonhole principle, $Ex(u, n) \leq {n \choose r}lr$, where $l$ denotes the length of $u$. Our main goal is to improve the upper bounds and lower bounds on extremal functions so that they are as close as possible.

Maximum lengths of Davenport-Schinzel sequences have been well-studied. If $a$ and $b$ are different letters and $u=abab\cdots$ is an alternation of length $s+2$, then $Ex(u,n)$ is exactly the maximum length of an order $s$ Davenport Schinzel sequence. It is well-known and easy to show that $Ex(a, n)=0, Ex(a b, n)=1, Ex(a b a, n)= n$ and $Ex(a b a b, n)=2n-1$. For alternations $u$ of greater length, $Ex(u, n)$ is not linear in $n$. Nivasch \cite{8} and Klazar \cite{7} proved that $Ex(ababa,n) \sim 2n\alpha(n)$, where $\alpha(n)$ is the extremely slow growing inverse Ackermann Function; we refer the reader to \cite{8} for more information on the inverse Ackermann Function. Agarwal, Sharir, Shor \cite{2} and Nivasch \cite{8} proved that if $u$ is an alternation of length $2t+4$, then $Ex(u, n)= n 2^{\frac{1}{t!}\alpha(n)^{t}\pm O(\alpha(n)^{t-1})}$ for $t\geq 1$.

Besides alternations and Davenport-Schinzel sequences, more general patterns and sequences have also been studied. A {\it generalized Davenport-Schinzel sequence} is an $r$-sparse sequence that does not contain a sequence $u$, where $u$ has $r$ distinct letters (and need not be an alternation). We are interested in the maximum length of a generalized Davenport-Schinzel sequence, which is precisely $Ex(u, n)$.  Fox {\it et al.} \cite{5} and Suk {\it et al.} \cite{10} used bounds on the lengths of generalized Davenport-Schinzel sequences to prove that $k$-quasiplanar graphs on $n$ vertices with no pair of edges intersecting in more than $t$ points have at most $(n\log n)2^{\alpha(n)^{c}}$ edges, where $c$ is a constant depending only on $k$ and $t$. 

General approaches to bounding $Ex(u,n)$ for all patterns $u$ have been found. In \cite{7}, Klazar considered special sequences called formations in order to bound general extremal functions. An $(r, s)$-{\it formation} is a concatenation of $s$ permutations of $r$ distinct letters. Klazar \cite{7} considered the function $F_{r,s}(n)$, which is the maximum length of any $r$-sparse sequence with $n$ distinct letters which avoids all $(r, s)$-formations. The relevance of this function to the extremal function lies in the fact that $Ex(u, n) \le F_{r,s}(n)$ for any sequence $u$ with $r$ distinct letters and length $s$. This inequality is a direct consequence of the fact that every $(r,s)$-formation contains $u$. Nivasch \cite{8} later improved this inequality to $Ex(u, n) \le F_{r,s-r+1}(n)$, for any sequence $u$ with $r$ distinct letters and length $s$.

Much work has been done on $F_{r,s}(n)$. Klazar \cite{6} proved that $F_{r,2}(n)=O(n)$ and $F_{r,3}(n)= O(n)$ for every $r$. Nivasch \cite{8} proved that $F_{r,4}(n)=\Theta(n\alpha(n))$ for $r\geq 2$.  Agarwal, Sharir, Shor \cite{2} and Nivasch \cite{8} proved that $F_{r,s}(n)=n2^{\frac{1}{t!}\alpha(n)^{t}\pm O(\alpha(n)^{t-1})}$ for all $r\geq 2$ and odd $s \geq 5$ with $t= \frac{s-3}{2}$. All of these bounds on $F_{r,s}(n)$ imply corresponding upper bounds on $Ex(u, n)$ by the comments mentioned in the previous paragraph.

In order to obtain the best possible bounds on extremal functions using formations, it is an important problem to find values of $r$ and $s$ for which we can guarantee that $Ex(u,n) \le F_{r,s}(n)$ or $Ex(u,n) = O(F_{r,s}(n))$. To this end, a function called formation width was introduced in \cite{GPT}.  The {\it formation width} of $u$, $\mathit{fw(u)}$, is the minimum value of $s$ such that there exists an $r$ for which every $(r, s)$-formation contains $u$. The {\it formation length} of $u$, $\mathit{fl}(u)$, is the minimum $r$ such that every $(r, fw(u))$-formation contains $u$. The following Lemma relates $fw(u)$ to $Ex(u,n)$.

\begin{lem} \cite{GPT} \label{1.4} $Ex(u, n)=O(F_{\mathit{fl}(u),fw(u)} (n))$ for any sequence $u$. \end{lem}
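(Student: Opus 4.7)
The plan is to exploit the defining property of $fw(u)$ and $fl(u)$: every $(fl(u), fw(u))$-formation contains $u$, so every $u$-avoiding sequence also avoids every such formation. Let $r(u)$ denote the number of distinct letters in $u$, and let $s$ be an $r(u)$-sparse $u$-avoiding sequence of length $Ex(u,n)$ on $n$ letters. In the base case $fl(u) = r(u)$, the sequence $s$ is already $fl(u)$-sparse and avoids every $(fl(u), fw(u))$-formation, giving $|s| \le F_{fl(u),fw(u)}(n)$ directly.

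In the general case $fl(u) > r(u)$ I would bridge the sparsity gap by a left-to-right greedy sparsification: scan $s$ and append each letter to an output $s^*$ unless it coincides with one of the last $fl(u) - 1$ letters already in $s^*$. By construction $s^*$ is $fl(u)$-sparse, and as a subsequence of $s$ it still avoids every $(fl(u), fw(u))$-formation, so $|s^*| \le F_{fl(u),fw(u)}(n)$. It then suffices to prove $|s^*| = \Omega(|s|)$, for then $Ex(u,n) = |s| = O(F_{fl(u),fw(u)}(n))$.

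The key step is a lower bound on $|s^*|$ via the skipped segments, i.e., the contiguous substrings of $s$ strictly between two consecutive kept positions (plus the trailing segment after the last kept position). By the greedy rule, every letter of such a segment lies in the window of the previous $fl(u) - 1$ kept letters, so each segment uses at most $fl(u) - 1$ distinct letters. Each segment is also a substring of $s$, hence $r(u)$-sparse and $u$-avoiding, so its length is at most $Ex(u, fl(u) - 1)$, a constant $C(u)$ depending only on $u$ (bounded crudely by the pigeonhole estimate $\binom{fl(u)-1}{r(u)} |u| r(u)$ from the introduction). Since the first letter of $s$ is always kept, the initial segment is empty, leaving at most $|s^*|$ non-empty skipped segments; summing their lengths yields $|s| \le (C(u) + 1) |s^*|$, as desired. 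The main point to verify is the alphabet bound on skipped segments, which follows directly from the greedy rule, so no substantial obstacle arises.
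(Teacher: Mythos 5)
Your proof is correct. The paper itself does not include a proof of this lemma (it is imported wholesale from [GPT]), but the argument you give is the standard one for this kind of sparsity-bridging bound: pass to a greedy $fl(u)$-sparse subsequence $s^*$, observe that $s^*$ inherits avoidance of every $(fl(u),fw(u))$-formation from $s$ because each such formation contains $u$, and then bound the discarded letters. Your key observation, that each skipped run of $s$ sits in the stable window of the last $fl(u)-1$ kept letters and is therefore an $r(u)$-sparse $u$-avoiding block on at most $fl(u)-1$ symbols of length at most the constant $Ex(u,fl(u)-1)$, is exactly what makes the constant-factor loss work, and your handling of the base case $fl(u)=r(u)$ and of the initial (empty) skipped segment is careful and correct. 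This matches the approach in [GPT].
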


In view of Lemma \ref{1.4}, computing $fw(u)$ for a sequence $u$ implies an upper bound on $Ex(u,n)$. For instance, if $fw(u) \leq 3$, then applying Lemma \ref{1.4} gives $Ex(u, n) = O(F_{fl(u),fw(u)}(n))=O(n)$, by the results on $F_{r,2}(n)$ and $F_{r,3}(n)$ mentioned above. Every sequence $u$ with $fw(u) \leq 3$ was identified in \cite{GPT} and, as a consequence, these sequences $u$ satisfy $Ex(u,n)=O(n)$ as well.

In this paper, we identify every sequence $u$ that has alternation length $5$ (i.e. $u$ contains $a b a b a$ but not $a b a b a b$) and formation width $4$. Note that for such sequences $u$, we have $Ex(u, n) =O(F_{fl(u),4}(u))=O(n\alpha(n))$ by Lemma \ref{1.4} and the bound on $F_{r,4}(n)$ mentioned above. Since u contains $ababa$, we also have $Ex(u,n)=\Omega(n\alpha(n))$ by the result that $Ex(ababa,n) \sim 2n \alpha(n)$ mentioned above and because of Lemma 1.1b in \cite{6}. Thus every identified sequence of alternation length $5$ and formation width $4$ has a tight bound of $\Theta(n\alpha(n))$ on the extremal function. By using formation width, we have identified the extremal function for infinite classes of previously unidentified sequences.

The significance of this result lies in the fact that $n \alpha(n)$ is nearly the lowest possible order that an extremal function can have. An implication of our result is that we have also identified every sequence with alternation length $5$ for which we may get tight bounds on the extremal function using only formation width and containment of the alternation.

The power of formation width lies in the fact that it is computationally feasible to directly compute formation width of small sequences. In contrast, it is almost impossible to directly compute the extremal function, as it requires mathematical proof to show that it holds for all $n$. In the appendix we include a shorter and faster algorithm than the one included in \cite{GPT} for computing formation width. Thus, our main theorem and our more efficient algorithm highlight the efficacy of formation width for deriving sharp bounds on extremal functions when there are already matching lower bounds.

In Section \ref{prel}, we prove preliminary results. In Section \ref{list}, we identify the sequences with formation width $4$, alternation length $5$, and $n$ distinct letters for $n \geq 6$, and we prove that all of these sequences have formation width $4$ in Section \ref{fw}. In Section \ref{only}, we prove that the sequences from Section \ref{list} are the only sequences with formation width $4$, alternation length $5$, and $n$ distinct letters for $n \geq 6$. In the appendix, we show the code we used to generate the list of sequences for $n \leq 6$.

\section{Preliminary results}\label{prel}

In this section, we make observations about all sequences $u$ which have formation width $4$ and alternation length $5$. These observations will be useful in the proof of our main result.

Let $u'$ be a sequence obtained by deleting a letter that occurs only once in a sequence $u$ with at least two distinct letters. Then $fw(u)=fw(u')$ by Corollary 13 in \cite{GPT} and $u$ has alternation length $5$ if and only if $u'$ does as well. Thus we will only consider those sequences $u$ for which each letter occurs at least twice (we call such a sequence \emph{reduced}), since all other sequences are obtained by adding a finite number of letters, each occuring once, to a reduced sequence.

Furthermore, if a letter occurs at least $4$ times in a reduced sequence $u$ with at least two distinct letters, then $u$ has a subsequence $u'$ on 2 letters with length $6$. Note that $fw(u) \ge fw(u') =5$, where the equality follows from Lemma 17 in \cite{GPT}. Also, if there are two letters $x$ and $y$ that both occur $3$ times in $u$, then the occurences of $x$ and $y$ in $u$ alone form a subsequence $u'$ such that $fw(u) \ge fw(u') = 5$ by Lemma 17 in \cite{GPT}. Thus if $u$ is an $n$-letter reduced sequence such that $fw(u)=4$ and $u$ contains $ababa$, then $u$ must have exactly one letter occuring $3$ times and all other letters occuring twice.

The following lemma is a more complex observation about reduced sequences with formation width $4$ and alternation length $5$.

\begin{lem}
\label{distinct}
If $u$ is a reduced sequence on $n$ letters that has a formation width of $4$ and an alternation length of $5$, then either the first $n$ letters or the last $n$ letters of $u$ must be pairwise distinct.
\end{lem}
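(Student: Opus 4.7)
The plan is to argue by contradiction: suppose both the first $n$ letters and the last $n$ letters of $u$ contain a repeated letter, and derive $fw(u) \ge 5$, contradicting the hypothesis. By the structural observations preceding the lemma, $u$ has length exactly $2n+1$, with one letter $z$ appearing three times and the remaining $n-1$ letters each appearing exactly twice; since $u$ has alternation length $5$, there is a two-time letter $w$ such that the projection of $u$ onto $\{z,w\}$ is the alternation $zwzwz$, at positions $z_1<w_1<z_2<w_2<z_3$ in $u$. Under the contrary assumption I would pick a letter $a$ occurring at least twice among $u_1,\ldots,u_n$ and a letter $b$ occurring at least twice among $u_{n+2},\ldots,u_{2n+1}$. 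A short counting step then forces $a\ne b$: the position ranges $[1,n]$ and $[n+2,2n+1]$ are disjoint, and no letter of $u$ occurs more than three times, so no letter can contribute two repetitions in each range.

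The core of the argument is to exploit the resulting rigid configuration — both copies of $a$ at positions $\le n$, both copies of $b$ at positions $\ge n+2$, and the spine $z_1,w_1,z_2,w_2,z_3$ straddling position $n+1$ — to extract a subsequence $v$ of $u$ with $fw(v) \ge 5$. Once such a $v$ is identified, subsequence-monotonicity of formation width (any $(r,s)$-formation containing $u$ automatically contains every subsequence of $u$) yields $fw(u)\ge fw(v)\ge 5$, the desired contradiction. I would produce $v$ via a short case analysis on the roles of $a$ and $b$: whether either coincides with $z$ or $w$, and otherwise how their pair of copies interleaves with $z_1,w_1,z_2,w_2,z_3$. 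In each case the subsequence of $u$ induced on $\{a,b,z\}$ or $\{a,b,z,w\}$ takes one of a small number of admissible forms, each of which I would then certify as having formation width at least $5$ by appealing to the small-sequence results of \cite{GPT}.

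The main obstacle is precisely this certification step. Because $u$ contains only one letter appearing three times, no pair of letters of $u$ together furnishes six occurrences, and so Lemma 17 of \cite{GPT} cannot be invoked on a two-letter subsequence of $u$; each case must therefore be settled on a three- or four-letter pattern, either by citing the formation-width value of the relevant small sequence from \cite{GPT} or, when no such citation suffices, by exhibiting for arbitrary $r$ an explicit $(r,4)$-formation that avoids the pattern. Enumerating the subcases produced by the interaction of $(a,b)$ with the spine $(z,w)$, and verifying $fw\ge 5$ for each emerging subpattern, is the technical heart of the proof.
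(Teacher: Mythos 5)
Your plan takes a genuinely different route from the paper. The paper proves the lemma by induction on $n$: the cases $n\le 6$ are verified by computer, and in the inductive step a hypothetical counterexample $v$ is turned into an $(n-1)$-letter counterexample by deleting a single two-occurrence letter (with several sub-cases controlling how the middle position shifts), contradicting the inductive hypothesis. Your plan instead tries to extract, in one stroke and non-inductively, a bounded-size subsequence of $u$ built from the spine $zwzwz$ and the repeated letters $a,b$, and then to certify that this particular small pattern already has formation width at least $5$; monotonicity of $fw$ under subsequences, which you state and justify correctly, then gives the contradiction. Your setup (length $2n+1$, unique thrice-occurring letter, the $zwzwz$ projection, $a\ne b$ by the counting argument) is all correct.

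The step you yourself flag as the ``technical heart'' is where the argument is incomplete, and it is harder than you anticipate. The 3-letter projection onto $\{a,z,w\}$ does not in general work: for instance, when the copies of $a$ interleave with the front of the spine so that the projection reads $z\,a\,w\,a\,z\,w\,z$, this relabels to $1232131$, which appears in this paper's $n=3$ list and has $fw=4$, so it is not a witness. This forces you to the 4-letter projection onto $\{a,b,z,w\}$ (a $9$-term pattern), but the interleavings of $(a_1,a_2)$ with the front of the spine and of $(b_1,b_2)$ with the back produce on the order of several dozen admissible forms, not ``a small number,'' and \cite{GPT} is not known to classify formation widths of all 3- and 4-letter patterns; the classification you would actually lean on is the computer-generated $n\le 4$ list appearing in the present paper. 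You do offer the correct fallback (exhibit a binary $(4,4)$-formation avoiding each offending pattern, via Lemma~\ref{binary}), so the plan is not unworkable — but once you observe that what you really want is that the 4-letter projection always has both its first four and last four entries non-distinct, and thus invoke the $n=4$ instance of the lemma itself, you have essentially rediscovered the paper's induction. In short: the approach is a legitimate alternative, but it trades the paper's clean one-letter-deletion bookkeeping for a substantial uncarried-out case enumeration and a citation that does not cover what is needed.
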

\begin{proof}
We proceed by induction on $n$. We used the Python algorithm in the appendix to verify that the lemma is true for all $n \le 6$. Suppose for some $n \geq 7$ that every reduced sequence with $n-1$ distinct letters, formation width $4$, and alternation length $5$ always has the first $n-1$ letters or the last $n-1$ letters pairwise distinct.  Then we prove that that every reduced sequence with $n$ distinct letters, formation width $4$, and alternation length $5$ always has the first $n$ letters or the last $n$ letters pairwise distinct.

Assume for contradiction that there exists an $n$ letter sequence $v$ such that $fw(v) = 4$, $v$ contains $ababa$, and both the first and last $n$ letters of $v$ have at least two occurences of a letter. Let the copy of $ababa$ in $v$ be represented by the letters $x$ and $y$, i.e. $v$ has a subsequence $xyxyx$. Note that this implies all letters except $x$ occur exactly twice in $v$.

If $v$ has a letter besides $x$ or $y$ that occurs once in the first $n$ letters and once in the last $n$ letters, then delete this letter to get a sequence $v'$ that contradicts the inductive hypothesis. Thus, we may assume that all letters other than $x$ and $y$ occur either twice in the first $n$ letters, twice in the last $n$ letters, or in the middle and somewhere else. We consider several cases based on the position of the subsequence $xyxyx$ in $v$.

\noindent
{\bf Case 1: $x$ or $y$ is the middle letter of $v$.}

{\bf Case 1A: The first or third $x$ is the middle letter of $v$.} Since the first $n$ and last $n$ letters both have a letter occuring twice, $v$ has $ccxyxyx$ or $xyxyxcc$ as a subsequence, for some letter $c$. But $fw(v) \ge fw(ccxyxyx)=fw(xyxyxcc)=fw(xyxyx)+1=5$ by Lemma 5 and Corollary 13 in \cite{GPT}, contradicting the assumption that $fw(v)=4$.

{\bf Case 1B: The second $x$ is the middle letter of $v$.} Then all letters besides $x$ or $y$ must occur twice in the first $n$ or twice in the last $n$ letters. In the first $n$ letters, delete two occurences of any letter other than $x$ and $y$ to get a new sequence $v'$ on $n-1$ letters. In $v'$, $x$ occurs twice in the first $n-1$ letters and some letter $c$ occurs twice in the last $n-1$ letters, where $c$ is a letter other than $x$, $y$, or the middle letter of $v'$. Therefore $v'$ contradicts the inductive hypothesis.

{\bf Case 1C: $y$ is the middle letter of $v$.} Without loss of generality, assume that the first $y$ is the middle letter of $v$. Then delete both occurences of a letter besides $x$ in the first $n$ letters of $v$ to obtain $v'$. Then both the first $n-1$ letters and the last $n-1$ letters of $v'$ have two occurrences of a letter besides $x$ or $y$, which contradicts the inductive hypothesis.

\noindent
{\bf Case 2: Neither $x$ nor $y$ is the middle letter of $v$.}

Let $t$ be the middle letter.

{\bf Case 2A: $xyxyx$ is a subsequence of the first $n$ letters or the last $n$ letters of $v$.} This is a contradiction for the same reason as Case $1A$.

{\bf Case 2B: $xyxy$ is a subsequence of the first $n$ letters and $x$ occurs in the last $n$ letters of $v$.} Let $v'$ be a sequence obtained by deleting a letter besides $x$, $y$, or $t$ that occurs twice in the last $n$ letters. Then the first $n-1$ letters of $v'$ have two occurrences of $x$ and the last $n-1$ letters of $v'$ must have another letter occuring twice, contradicting the inductive hypothesis.

{\bf Case 2C: $xyx$  is a subsequence of the first $n$ letters and $yx$ is a subsequence of the last $n$ letters of $v$.} Let $v'$ be a sequence obtained by deleting a letter besides $x$, $y$, or $t$ that occurs twice in the last $n$ letters. Then the first $n-1$ letters of $v'$ have two occurrences of a letter other than $x$, $y$, or $t$, as do the last $n-1$ letters of $v'$. Thus $v'$ contradicts the inductive hypothesis.

We have shown that every case leads to a contradiction. Thus, our induction is complete.
\end{proof}

Given Lemma \ref{distinct}, when we identify the sequences $u$ with $n$ distinct letters for which $fw(u) = 4$ and $u$ contains $ababa$, we will only consider the sequences $u$ where the first $n$ letters are all distinct; the sequences in which the last $n$ letters are distinct can be obtained by reversing a sequence in which the first $n$ letters are distinct. We conclude this section with a final observation, also proved by induction.

\begin{lem}\label{midd}
Let $n\ge 6$. If $u$ is an $n$-letter reduced sequence with formation width $4$ and alternation length $5$ such that the first $n$ letters of $u$ are distinct, then the middle letter of $u$ must always be the same as the first or second letter of $u$. 
\end{lem}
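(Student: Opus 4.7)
The plan is to induct on $n$, with the base case $n = 6$ handled by the Python algorithm in the appendix, much as in Lemma \ref{distinct}. For the inductive step, fix $n \ge 7$ and assume the claim holds for $n-1$. Let $u$ be an $n$-letter reduced sequence with $fw(u)=4$, alternation length $5$, and first $n$ letters distinct. By the preliminary observations of this section, $u$ has length $2n+1$, exactly one letter $x$ occurs three times, and every other letter occurs twice; fix a subsequence $xyxyx$ of $u$ witnessing the alternation. Let $p_1$ and $p_2$ denote the first two letters of $u$, and let $m$ denote the middle letter, that is, the letter at position $n+1$. Assume for contradiction that $m \notin \{p_1, p_2\}$.

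I would produce a contradiction by deleting both occurrences of a carefully chosen letter $c$ to obtain an $(n-1)$-letter sequence $u'$ whose first two letters are still $p_1$ and $p_2$ but whose middle letter is still $m$. Choose $c$ to lie outside $\{x, y, m, p_1, p_2\}$; this excludes at most five letters, so for $n \ge 7$ such a $c$ exists. Since the first $n$ letters of $u$ are distinct, $c$ occupies some position $i \le n$, and $c \ne p_1, p_2$ forces $i \ge 3$. The second occurrence of $c$ cannot lie in positions $1, \dots, n$ (by distinctness there) nor at position $n+1$ (since $c \ne m$), so it occupies some position $j \ge n+2$.

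Let $u'$ be the sequence obtained from $u$ by deleting both copies of $c$. Then $u'$ has $n-1$ distinct letters, is reduced (the other letters retain their multiplicities), and has length $2(n-1)+1$. Because $u'$ is a subsequence of $u$, $fw(u') \le fw(u) = 4$, and because $u'$ still contains the copy $xyxyx$ of $ababa$ while $fw(ababa) = 4$, we also have $fw(u') \ge 4$; hence $fw(u')=4$. Since $u'$ still contains $ababa$ and avoids $ababab$, its alternation length is $5$. The first $n-1$ positions of $u'$ are precisely the first $n$ positions of $u$ with position $i$ removed, so they are pairwise distinct; moreover, $i \ge 3$ keeps $p_1$ and $p_2$ as the first two letters of $u'$. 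Finally, because one deletion is at position $i \le n$ and the other is at $j \ge n+2$, position $n+1$ of $u$ slides to position $n$ of $u'$, so the middle letter of $u'$ equals $m$. Thus $u'$ satisfies every hypothesis of the inductive claim yet its middle letter is neither of its first two letters, contradicting the inductive hypothesis.

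The main difficulty is balancing the five constraints on $c$: avoiding $x$ and $y$ preserves the alternation and formation width; avoiding $m$ keeps the middle position intact under the double deletion so that $m$ remains central in $u'$; and avoiding $p_1, p_2$ pushes the first occurrence of $c$ past position $2$ so that $p_1, p_2$ remain the first two letters of $u'$. These exclusions together force $n \ge 7$ for the inductive step, which matches the threshold $n = 6$ at which the base case is verified computationally.
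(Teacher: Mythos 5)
Your proposal is correct and follows essentially the same approach as the paper's own proof: induct on $n$ with the computational base case $n=6$, suppose a counterexample exists, and delete both occurrences of a letter $c \notin \{x, y, m, p_1, p_2\}$ to produce a smaller counterexample. Your version carefully tracks positions to justify that the middle and first two letters are preserved under deletion, a detail the paper's proof states more briefly, but the argument is the same.
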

\begin{proof}
We prove the claim by induction. For the case $n=6$, see the list in the appendix. For the inductive hypothesis, assume that for some $n \geq 7$ the middle letter is the same as the first or second letter in all $(n-1)$-letter sequences $u$, of formation width $4$ and alternation length $5$, such that the first $n-1$ letters of $u$ are distinct. We prove the same is true when $n-1$ is replaced by $n$.

Suppose for contradiction that there exists a sequence $v$ on $n$ distinct letters such that the first $n$ letters of $v$ are distinct, $fw(v)=4$, $v$ has a subsequence $xyxyx$, and $v$ has a middle letter $t$ that is not the same as the first or second letters of $v$. Then let $v'$ be the sequence obtained by deleting the two occurences of some letter other than $x$, $y$, $t$, the first, or the second letter of $v$. The deleted letter had to occur both in the first $n$ and the last $n$ letters of $v$, so $v'$ still has a middle letter that is not its first or second letter. Thus $v'$ contradicts the inductive hypothesis.

Therefore our induction is complete.
\end{proof}

\section{Proof of Main Theorem}\label{list}

In this section, we state and prove our main theorem. Throughout the rest of the paper, we number the letters of sequences $1, 2, \ldots$ in order of their first occurrence in the sequence.

\begin{theo}\label{mainlist}

Up to reversal and adding a finite number of distinct letters that each occur once, every sequence that has formation width $4$ and alternation length $5$ must be isomorphic to one of the following sequences:

\begin{scriptsize}
\begin{itemize}

\item $12121$

\item $1233121$

\item $123412134$

\item $123441213$

\item $123413214$

\item $123431243$

\item $123421432$

\item $123431214$

\item $123432143$

\item $123412143$

\item $12345124325$

\item $12345312154$

\item $1 2 \ldots n 1 3 \ldots i 2 (i+1) \ldots n 1$ for $n \geq 4$ and $i=3, 4, \ldots n-1$

\item $1 2 \ldots n 1 2 \ldots (i-1) (i+1) \ldots n i 1$ for $n \geq 4$ and $i = 3, 4, \ldots, n-1$

\item $1 2 \ldots n 1 3 \ldots n 2 1$ for $n \geq 3$

\item $1 \ldots n 2 \ldots n 21$ for $n \geq 3$

\item $1 \ldots n 2 1 3 \ldots n 1$ for $n \geq 3$

\item $1 \ldots n 2 1  3 \ldots n 2$ for $n \geq 3$

\item $1\ldots n1\ldots ni$, for $n \geq 2$ and $i  = 1,...,n-1$

\item $1\ldots n1 \ldots (n-1)i n$ for $n \geq 3$ and $i = 1,...,n-2$

\item $1 \ldots n 1 2 4 \ldots n 3 2$ for $n \geq 4$

\item $1 \ldots n 1 3 \ldots n 3 2$ for $n \geq 4$

\end{itemize}
\end{scriptsize}
\end{theo}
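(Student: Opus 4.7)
The plan is to establish the theorem in two directions. First, I would verify that each listed sequence has formation width $4$ and alternation length $5$ (carried out in Section~\ref{fw}). Second, I would show by induction on the number of letters that no other sequences satisfy both conditions (Section~\ref{only}).

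For the first direction, alternation length $5$ is checked by exhibiting two letters of $u$ whose occurrences form an $ababa$ subsequence, and verifying that no pair of letters forms $ababab$. This is feasible because, by the observations in Section~\ref{prel}, exactly one letter of $u$ occurs three times and all others occur twice, so at most one pair could even be a candidate for $ababab$. The lower bound $fw(u) \geq 4$ follows from $u$ containing $ababa$ together with $fw(ababa) = 4$ and monotonicity of $fw$ under containment. For the upper bound $fw(u) \leq 4$, I would, for each family in the list, exhibit a value of $r$ and argue that every $(r,4)$-formation contains $u$ by a pigeonhole argument on how the four permutations align.

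For the second direction, the proof proceeds by induction on $n$. The base case $n \leq 6$ is handled by the exhaustive computer search described in the appendix. For $n \geq 7$, let $v$ be a reduced $n$-letter sequence with $fw(v) = 4$ and alternation length $5$. By the observations in Section~\ref{prel}, exactly one letter $x$ of $v$ appears three times and every other letter appears exactly twice. By Lemma~\ref{distinct}, up to reversal we may assume the first $n$ letters of $v$ are pairwise distinct, and by Lemma~\ref{midd} the middle letter of $v$ equals its first or second letter. Now I would pick a letter $y$ distinct from $x$, from the first, second, and middle letters of $v$, and from the letters of a fixed $ababa$ subsequence; such a $y$ exists since $n \geq 7$. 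Deleting both occurrences of $y$ produces a reduced $(n-1)$-letter sequence $v'$ that inherits the distinct-first-$(n-1)$-letters property, still contains $ababa$, and satisfies $fw(v') = 4$ (upper bound by monotonicity, lower bound by containment of $ababa$). The alternation length remains $5$ since $v' \subseteq v$. By the inductive hypothesis, $v'$ is isomorphic (up to reversal and addition of single-occurrence letters) to one of the listed sequences.

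The main obstacle is the remaining case analysis: for each possible parent $v'$, one must examine every position where $y$ can be reinserted and check that the result is itself on the list. Most insertions yield a sequence that either (i) matches a listed family with $n$ incremented by one, (ii) contains a two-letter length-$6$ subsequence or two letters each occurring three times, which by Lemma~17 of~\cite{GPT} forces $fw \geq 5$, or (iii) contains an $ababab$ subsequence, violating alternation length $5$. The sheer number of families and reinsertion positions makes this bookkeeping the most delicate part of the argument, but the structural constraints from Lemmas~\ref{distinct} and~\ref{midd}, together with the one-letter-of-multiplicity-three structure, substantially restrict the viable cases and make the enumeration manageable.
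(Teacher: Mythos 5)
Your high-level skeleton matches the paper's --- verify the list in one direction, classify by induction on $n$ with the $n\le 6$ base case handled by computer in the other --- but both directions as you sketch them have genuine gaps.

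For the upper bound $fw(u)\le 4$, you propose to fix an $r$ and show that \emph{every} $(r,4)$-formation contains $u$ by pigeonhole. The space of arbitrary $(r,4)$-formations is $(r!)^4$, and there is no obvious pigeonhole reducing this to a tractable set. The paper's key reduction is Lemma~\ref{binary}: it suffices to check \emph{binary} formations (each permutation is $p$ or $\bar p$), so there are only $8$ cases per family, and for each one exhibits an explicit labeling of $p$ that makes the containment transparent. Without that reduction the verification is not manageable in the way you assert.

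The more serious gap is in the classification. You delete a letter $y$ to obtain $v'$ on the list and then propose to enumerate all ways of reinserting $y$, arguing that ``most'' reinsertions either match a listed family, create a two-letter alternation of length six, or create two letters each of multiplicity three. But those last two criteria only cover the obvious obstructions. A reinsertion that adds one new letter occurring twice leaves exactly one letter of multiplicity three and no length-six alternation, so it falls outside your cases (ii) and (iii); you would still have to prove $fw\ge 5$ directly for each such extension, and that is precisely the hard content of the theorem, not a bookkeeping afterthought. You acknowledge this is ``the most delicate part'' but do not actually close it. The paper sidesteps this entirely by running the induction in the contrapositive: assume $u$ is \emph{not} of the target form, identify the first position $z$ where $u$ deviates from the target $v$ (with $j$ the corresponding letter of $v$), and delete a letter disjoint from $\{1,2,x,y,z,j,\ldots\}$ so that the resulting $(n-1)$-letter sequence is \emph{still} not of the target form, contradicting the inductive hypothesis. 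This ``delete a letter that preserves the witness of deviation'' device is the idea that makes the induction actually go through, and it is absent from your proposal. Supporting that device, the paper also first restricts the possible endings and middle letters of $u$ (Lemmas~\ref{distinct}, \ref{midd}, and a case split on the last letter of $u$), which you gesture at but do not use to set up the target form $v$ before deleting.
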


\begin{corollary} If $u$ is a sequence that is listed in Theorem \ref{mainlist}, then $Ex(u,n)=\Theta(n \alpha(n))$.
\end{corollary}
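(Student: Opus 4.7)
My plan is to derive the corollary essentially as a repackaging of the results already cited in the introduction, with Theorem \ref{mainlist} supplying the formation-width information needed to invoke Lemma \ref{1.4}. There are two inequalities to establish, and each is nearly immediate from an earlier result.

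For the upper bound, I would take any $u$ appearing in the list of Theorem \ref{mainlist} and apply Lemma \ref{1.4}. Since the theorem guarantees $fw(u)=4$, this yields $Ex(u,n)=O(F_{fl(u),4}(n))$. Every sequence in the list uses at least two distinct letters, so $fl(u)\ge 2$, and Nivasch's bound $F_{r,4}(n)=\Theta(n\alpha(n))$ (valid for $r\ge 2$) produces $Ex(u,n)=O(n\alpha(n))$.

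For the lower bound, I would observe that by hypothesis every sequence $u$ in the list has alternation length $5$, so by definition $u$ contains $ababa$. Any $r$-sparse sequence avoiding $ababa$ is automatically $2$-sparse and avoids $u$ (if it contained $u$, it would contain $ababa$), which via Lemma 1.1b of \cite{6} yields $Ex(ababa,n)\le Ex(u,n)$ up to lower-order factors that do not affect the asymptotic order. The Nivasch--Klazar estimate $Ex(ababa,n)\sim 2n\alpha(n)$ then gives $Ex(u,n)=\Omega(n\alpha(n))$, and combining with the upper bound completes the argument.

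There is no real obstacle: the corollary is the intended payoff of Theorem \ref{mainlist}, and the plan is foreshadowed in the introduction. The only minor point requiring care is bridging the differing sparsity requirements in the definition of $Ex(\cdot,n)$ (the ambient sequences must be $r$-sparse, where $r$ depends on the pattern being avoided), but this is precisely what Lemma 1.1b of \cite{6} is designed to handle, so I would simply cite it in both directions.
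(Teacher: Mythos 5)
Your proposal matches the paper's argument essentially verbatim: the paper derives the corollary in its introduction by combining Lemma~\ref{1.4} with Nivasch's bound $F_{r,4}(n)=\Theta(n\alpha(n))$ for the upper bound, and the Nivasch--Klazar estimate $Ex(ababa,n)\sim 2n\alpha(n)$ together with Lemma 1.1b of \cite{6} (to reconcile the sparsity requirements) for the lower bound. Your write-up is correct and takes the same route; the added remark about $fl(u)\ge 2$ is a harmless explicit check of the hypothesis for Nivasch's bound.
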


Clearly all of the above sequences have alternation length $5$. In \ref{fw}, we first prove that each of these sequences has formation width $4$, and in \ref{only}, we show that these are indeed the only reduced sequences (up to isomorphism and reversal) that have alternation length $5$ and formation width $4$.

\subsection{Proof that the sequences have formation width $4$} \label{fw}

Using the code for formation width in the appendix, we have verified that every sequence in Theorem \ref{mainlist} with $6$ or fewer letters indeed has formation width $4$. Thus, we just focus on showing that the general classes listed in Theorem \ref{mainlist} always have formation width $4$.

For every sequence $u$ in Theorem \ref{mainlist}, we have $fw(u) \ge fw(ababa)=4$. Thus we just have to show that $fw(u) \le 4$. Call a formation \emph{binary} if each of its permutations is the same or the reverse of the first. The following result about binary formations was proved in \cite{GPT}.

\begin{lem}\label{binary}
If $u$ has $r$ distinct letters, then every binary $(r, s)$-formation contains $u$ if and only if $s \geq fw(u)$. 
\end{lem}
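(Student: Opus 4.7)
The plan is to prove the two implications separately. The forward direction (\emph{$s \geq fw(u)$ implies every binary $(r,s)$-formation contains $u$}) calls for a lift-and-restrict argument, while the converse requires an Erd\H{o}s--Szekeres reduction from arbitrary formations to binary ones.

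For the forward direction, I would start by assuming $s \geq fw(u)$. By definition of $fw(u)$, there exists some $r^*$ such that every $(r^*,s)$-formation contains $u$, and necessarily $r^* \geq r$. Given any binary $(r,s)$-formation $B$ on letters $\{1,\ldots,r\}$, the plan is to lift it to a binary $(r^*,s)$-formation $B^{\uparrow}$ by extending each permutation to $\{1,\ldots,r^*\}$ while preserving its identity-versus-reverse status: an identity becomes the identity of $\{1,\ldots,r^*\}$ and a reverse becomes the reverse. Since $B^{\uparrow}$ is an $(r^*,s)$-formation, it contains $u$, and any occurrence uses some $r$-subset $A \subseteq \{1,\ldots,r^*\}$. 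Restricting each permutation of $B^{\uparrow}$ to $A$ yields a binary $(r,s)$-formation on $A$ with exactly the same pattern of reversals as $B$; after relabeling $A$ to $\{1,\ldots,r\}$ in the natural order, this restriction is identical to $B$. Hence $B$ contains $u$.

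For the converse, I would assume every binary $(r,s)$-formation on $r$ letters contains $u$ and pick $r^* = r^{2^{s-1}}$. Given any $(r^*,s)$-formation $F = P_1 P_2 \cdots P_s$, relabel so that $P_1$ is the identity $1 2 \cdots r^*$. I would then iteratively apply the Erd\H{o}s--Szekeres theorem: setting $S_0 = \{1,\ldots,r^*\}$, for $i = 1,\ldots,s-1$ extract $S_i \subseteq S_{i-1}$ of size at least $\sqrt{|S_{i-1}|}$ on which $P_{i+1}$ is monotone with respect to the ambient order. After $s-1$ iterations, $|S_{s-1}| \geq r$ and every $P_i$ is monotone on $S_{s-1}$. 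Picking any $r$ elements of $S_{s-1}$ and restricting $F$ to them yields a binary $(r,s)$-sub-formation, which by hypothesis contains $u$; hence $F$ itself contains $u$, and $s \geq fw(u)$.

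The main obstacle lies in the converse direction: calibrating the iterated Erd\H{o}s--Szekeres bounds and tracking nested monotone subsets across all $s-1$ non-identity permutations so that the final subset still has at least $r$ elements drives the doubly-exponential choice $r^* = r^{2^{s-1}}$. The forward direction, by contrast, reduces to the routine but essential observation that the identity (respectively reverse) of $\{1,\ldots,r^*\}$ restricts to the identity (respectively reverse) order on every subset --- straightforward once stated, but needed to justify that the lifted-and-restricted formation really is isomorphic to $B$.
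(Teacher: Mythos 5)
The paper does not prove this lemma itself; it cites the result as Theorem~6 (or its equivalent) from \cite{GPT}, so there is no in-paper argument to compare against. Judged on its own merits, your proof is correct and complete in structure. The forward direction's lift-and-restrict argument is sound: relabeling so the first permutation of $B$ is the identity, you correctly observe that the identity (resp.\ reverse) of $\{1,\ldots,r^*\}$ restricts to the increasing (resp.\ decreasing) order on any subset $A$, so the occurrence of $u$ in the lifted formation $B^{\uparrow}$ survives restriction to the $r$ letters it uses and pulls back to $B$. One small gap you glossed over: the definition of $fw(u)$ gives an $r^*$ with every $(r^*, fw(u))$-formation containing $u$, and you need the easy observation that a prefix of a longer formation is a formation to upgrade this to every $(r^*,s)$-formation for $s \geq fw(u)$. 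The converse via iterated Erd\H{o}s--Szekeres is the standard and essentially forced approach: after fixing $P_1$ as the identity, nested monotone refinements shrink the ground set by at most a square root at each of the $s-1$ remaining permutations, so $r^* = r^{2^{s-1}}$ guarantees a final set of size at least $r$ on which every $P_i$ is either increasing or decreasing, i.e.\ a binary $(r,s)$-sub-formation. This is the same Erd\H{o}s--Szekeres reduction used in \cite{GPT} (their Theorem~6), so your proof follows the intended route; the only stylistic difference is that \cite{GPT} packages the monotone-refinement step as a separate lemma, whereas you inline it.
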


In view of Lemma \ref{binary}, to show the sequences above have formation width at most $4$, it suffices to show that each of them are contained in every binary $(n, 4)$-formation. Let the first permutation of every formation be $p$, and let its reverse be $\bar{p}$. In our proofs, we just have to show that all $8$ possibilities for the binary formation (i.e. $pppp, ppp\bar{p}, pp\bar{p}p, p\bar{p}pp, pp\bar{p}\bar{p}, p\bar{p}p\bar{p}, p\bar{p}\bar{p}p, p\bar{p}\bar{p}\bar{p}$) contain $u$. In each case, we show that we can number the letters of $p$ on $1, 2, ..., n$ in some way so that the formation has $u$ as a subsequence.

Lemma 28 in \cite{GPT} proved that $fw(1 2 \ldots n13 \ldots i 2 (i+1) \ldots n 1)=4$ for $i=3, 4, \ldots n-1$, $fw(1 2 \ldots n 1 2 \ldots (i-1) (i+1) \ldots n i 1) = 4$ for $i=3, 4, \ldots n-1$, and $fw(1 2 \ldots n 1 3 \ldots n 2 1) = 4$. We show in the following lemmas that the rest of the sequences in Theorem \ref{mainlist} must also have formation width $4$.

\begin{lem}
$fw(1 \ldots n 2 \ldots n 21)=4$
\end{lem}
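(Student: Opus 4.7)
The lower bound $fw(u) \ge 4$ is immediate: the three occurrences of the letter $2$ (at positions $2$, $n+1$, $2n$) together with the two occurrences of $3$ (at positions $3$, $n+2$) form a subsequence $2,3,2,3,2$, so $u$ contains $ababa$ and $fw(u) \ge fw(ababa) = 4$. For the matching upper bound, I would invoke Lemma \ref{binary}: it then suffices to embed $u$ as a subsequence of each of the eight binary $(n,4)$-formations $pppp$, $ppp\bar p$, $pp\bar pp$, $p\bar ppp$, $pp\bar p\bar p$, $p\bar pp\bar p$, $p\bar p\bar pp$, $p\bar p\bar p\bar p$. For each one the job is to choose a relabeling of the $n$ letters of the base permutation $p$ and exhibit explicit positions in the formation forming a copy of $u$.

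The structural observation that guides the labelings is that $u = 1,2,\ldots,n,2,3,\ldots,n,2,1$ decomposes into an initial ``$1$'', two ascending runs ``$2,3,\ldots,n$'', and a tail ``$2,1$''. Each ascending run naturally fits inside one copy of $p$ read left-to-right (under the identity labeling), and the tail ``$2,1$'' fits inside a single $\bar p$ (which ends in $2,1$ under that labeling). Consequently, the formations containing a $\bar p$ at a convenient late position, namely $ppp\bar p$, $pp\bar pp$, $pp\bar p\bar p$, $p\bar pp\bar p$, and $p\bar p\bar p\bar p$, all succumb to the identity labeling $p = 1\,2\cdots n$: read $1\,2\cdots n$ from $p_1$, then $2\,3\cdots n$ from the next $p$, and finally $2,1$ from the terminal $\bar p$. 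The case $pppp$ is similar, with the final $2$ drawn from $p_3$ and the final $1$ from $p_4$.

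The main obstacles are the cases $p\bar ppp$ and $p\bar p\bar pp$, where no $\bar p$ remains at the end to absorb the tail ``$2,1$''. For $p\bar ppp$ I would use the labeling $p = 3,4,\ldots,n,2,1$, which places $2$ and $1$ in the last two positions of every copy of $p$ (so that $\bar p$ reads $1,2,n,n-1,\ldots,3$). Then the initial ``$1$'' of $u$ is taken at the last position of $p_1$, the first ``$2$'' at the second position of $\bar p_2$, the two ascending blocks $3,\ldots,n$ from the prefixes of $p_3$ and $p_4$, the middle ``$2$'' from the penultimate position of $p_3$, and the final ``$2,1$'' from the last two positions of $p_4$; this yields the strictly increasing position sequence $n,\,n+2,\,2n+1,\,2n+2,\ldots,3n-2,\,3n-1,\,3n+1,\,3n+2,\ldots,4n-2,\,4n-1,\,4n$. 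An analogous, similarly tailored labeling handles $p\bar p\bar pp$ (for instance one in which $1$ and $2$ are positioned so that the terminal $\bar p$'s reading naturally hosts the tail, and the ascending runs come from $p_1$ and $p_4$). Once all eight binary cases are verified, Lemma \ref{binary} gives $fw(u) \le 4$, which together with the lower bound yields $fw(u) = 4$.
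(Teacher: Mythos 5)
The overall approach — invoking Lemma \ref{binary} and exhibiting, for each of the eight binary $(n,4)$-formations, a labeling of $p$ under which $u$ embeds — is exactly the paper's. The cases $pppp$, $ppp\bar p$, $pp\bar pp$, $pp\bar p\bar p$, $p\bar pp\bar p$ via the identity labeling, and $p\bar ppp$ via $p = 3,4,\ldots,n,2,1$, are all correct, and your explicit position list for $p\bar ppp$ checks out.

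However, there is a genuine error in your claim that $p\bar p\bar p\bar p$ ``succumbs to the identity labeling.'' In that formation there is no ``next $p$'' after $p_1$: the remaining three permutations are all $\bar p = n, n-1, \ldots, 1$, which is decreasing. The middle block of $u$, namely $2,3,\ldots,n$, is an increasing run of length $n-1$, and a concatenation of three decreasing permutations contains no increasing subsequence of length greater than $3$. So for $n \geq 5$ the identity labeling cannot embed $u$ in $p\bar p\bar p\bar p$ at all, no matter how the prefix is split. (For instance with $n=5$ one can check directly that $1,2,3,4,5,2,3,4,5,2,1$ is not a subsequence of $12345\,54321\,54321\,54321$.) A different labeling is needed here; the paper uses $p = 1,2,n,n-1,\ldots,3$, so that $\bar p = 3,4,\ldots,n,2,1$, and then $u$ is read as $1,2$ from $p_1$, $3,\ldots,n,2$ from $\bar p_2$, and $3,\ldots,n,2,1$ from $\bar p_4$. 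Separately, your treatment of $p\bar p\bar pp$ is only a gesture (``an analogous, similarly tailored labeling'') rather than an actual labeling; the paper takes $p = n,\ldots,1$, reading $1\ldots n$ from $\bar p_2$, $2\ldots n$ from $\bar p_3$, and $2,1$ from $p_4$. Until both of these cases are supplied with working labelings, the upper bound $fw(u) \le 4$ is not established.
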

\begin{proof}

\noindent
{\bf Case 1: The first two permutations are $pp$.} Let $p=1 \ldots n$: take $2$ in the third and $1$ in the last permutation.

\noindent
{\bf Case 2: The formation is $p\bar{p}pp$.} Let $p=3 \ldots n 2 1 $:  take $12$ in the second, $3 \ldots n 2$ in the third, and $3 \ldots n 2 1$ in the last.

\noindent
{\bf Case 3: The formation is $p \bar{p} \bar{p} \bar{p}$.} Let $p=1 2 n \ldots 3$: take $12$ in the first, $3 \ldots n 2$ in the second, and $3 \ldots n 2 1$ in the last.

\noindent
{\bf Case 4: The formation is $p\bar{p}\bar{p}p$.} Let $p=n \ldots 1$: take $1 \ldots n$ in the second, $2 \ldots n$ in the third, and $21$ in the last.

\noindent
{\bf Case 5: The formation is $p\bar{p}p\bar{p}$.} Let $p=1 \ldots n$: take $1 \ldots n$ in the first, $2 \ldots n$ in the third, and $21$ in the last.
\end{proof}

\begin{lem}
$fw(1 \ldots n 2 1 3 \ldots n 1)=4$
\end{lem}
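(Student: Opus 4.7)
The plan is to show $fw(u) \ge 4$ and $fw(u) \le 4$ separately. Reading only the copies of the letters $1$ and $3$ in $u = 1\ldots n\,2\,1\,3\ldots n\,1$ produces the alternation $1,3,1,3,1$, so $u$ contains $ababa$ and hence $fw(u) \ge fw(ababa) = 4$. For the upper bound, Lemma \ref{binary} reduces the task to showing that $u$ is contained in every binary $(n,4)$-formation. In each of the eight cases I would choose a specific $p$ and then specify, perm by perm, which letters of $u$ are read.

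It helps to split $u$ into four blocks $A = 1,2,\ldots,n$, $B = 2,1$, $C = 3,4,\ldots,n$, $D = 1$. A subsequence read from a copy of $p$ must be ascending in $p$, and one from $\bar{p}$ must be ascending in $\bar{p}$. The naive choice $p = 1,2,\ldots,n$ already handles four of the eight cases: for $pppp$ and $ppp\bar{p}$ I would read $A$ in Perm $1$, the letter $2$ in Perm $2$, the block $1,3,4,\ldots,n$ in Perm $3$, and $D$ in Perm $4$; for $p\bar{p}pp$ and $p\bar{p}p\bar{p}$ I would read one block per perm, in the order $A,B,C,D$. For the remaining four formations the natural $p$ no longer suffices, and I would take $p = 3,4,\ldots,n,1,2$ for $pp\bar{p}p$, the permutation $p = 1,2,n,n-1,\ldots,4,3$ for both $pp\bar{p}\bar{p}$ and $p\bar{p}\bar{p}p$, and $p = n,n-1,\ldots,3,1,2$ for $p\bar{p}\bar{p}\bar{p}$; a short check then shows how to distribute the four blocks among the four perms.

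The tightest case is $pp\bar{p}\bar{p}$: blocks $C$ and $D$ must be entirely carried by the two trailing $\bar{p}$'s that follow $B$, which forces the letters $3,4,\ldots,n$ to appear in descending order inside $p$. But then only the prefix $1,2$ of $A$ is ascending enough to be read from Perm $1$, and the chosen shape of $p$ leaves just the single letter $3$ available in Perm $2$; the remainder of $A$ has to be smuggled into the first $\bar{p}$ alongside $B$. The choice $p = 1,2,n,n-1,\ldots,4,3$ threads the needle: Perm $1$ reads $1,2$, Perm $2$ reads $3$, Perm $3$ reads $4,\ldots,n,2,1$, and Perm $4$ reads $3,4,\ldots,n,1$. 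Once this case is worked out, the remaining cases reduce to straightforward verifications of the ordering constraints imposed in $p$ and $\bar{p}$.
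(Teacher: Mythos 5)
Your proof is correct and follows the same approach as the paper: reduce to binary $(n,4)$-formations via Lemma \ref{binary}, then exhibit, for each of the $8$ shapes $pppp,\ldots,p\bar p\bar p\bar p$, a labelling of $p$ under which the formation contains $u$. Your specific choices of $p$ differ from the paper's in some cases (the paper bundles five shapes into one case by choosing the \emph{repeated} permutation to be $3\,4\cdots n\,2\,1$, whereas you handle them separately with $p=1\cdots n$, $p=1\,2\,n\cdots 3$, and $p=n\cdots 3\,1\,2$), but all of your choices check out. One small observation in your favor: the paper's Case 1, as literally stated (``last two permutations are the same, or the formation contains $p\bar p\bar p$ or $\bar p p p$''), does not cover the shape $p\bar p p\bar p$, which you do treat explicitly (with $p=1\cdots n$, reading one block per permutation). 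So your case analysis is, if anything, slightly more airtight than the published one.
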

\begin{proof}

\noindent
{\bf Case 1: The last two permutations are the same, or the formation contains $p\bar{p}\bar{p}$ or $\bar{p}p p$.} Let the repeated permutation be $3 \ldots n 21$.

\noindent
{\bf Case 2: The formation is $p p p \bar{p}$.} Let $p=1 \ldots n$.

\noindent
{\bf Case 3: The formation is $p p \bar{p}p$.} Let $p=3 \ldots n 1 2$.
\end{proof}

\begin{lem}
$fw(1\ldots n1\ldots ni)=4$ for $i=1,\ldots ,n-1$
\end{lem}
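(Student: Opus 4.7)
Following the strategy of the preceding lemmas, the plan is to invoke Lemma~\ref{binary}, which reduces the problem to verifying that each of the eight binary $(n,4)$-formations contains $u = 1\ldots n\, 1\ldots n\, i$. I will dispose of all eight cases uniformly via a single pigeonhole observation rather than treating them individually.

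The key structural feature of $u$ is that it consists of two full copies of $1\ldots n$ followed by a single additional letter. To embed $u$ into a binary $(n,4)$-formation, it therefore suffices to find indices $j_1<j_2<j_3\le 4$ such that the permutations in positions $j_1$ and $j_2$ share a common orientation (so, under a suitable labeling of $p$, each of them spells out $1\ldots n$ in order), while the permutation in position $j_3$ supplies the trailing letter $i$ (which is automatic, since every permutation contains every letter).

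Such indices always exist: since each of the four permutations is either $p$ or $\bar{p}$, the pigeonhole principle applied to the first three positions guarantees that two of them, say $j_1 < j_2 \le 3$, have the same orientation. I then set $j_3 = 4$. If the shared orientation is $p$, I label $p = 1\ldots n$; if it is $\bar{p}$, I label $p = n\ldots 1$ so that $\bar{p} = 1\ldots n$. In either case, positions $j_1$ and $j_2$ contribute the two copies of $1\ldots n$, and position $4$ contributes the letter $i$, exhibiting $u$ as a subsequence of the formation. Combined with the lower bound $fw(u)\ge fw(ababa)=4$, this yields $fw(u)=4$.

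There is no real obstacle here: the content is simply the observation that $u$ only asks for two monotone passes plus one stray letter, so the restrictive cases $p\bar{p}\bar{p}p$ and $p\bar{p}\bar{p}\bar{p}$ (where the ``$p$'' pattern appears only once or not at all in the first three positions) are handled by relabeling $\bar{p}$ as $1\ldots n$. If desired, the write-up can still explicitly enumerate the eight cases in the style of the previous lemmas, but the pigeonhole phrasing makes clear that the enumeration is redundant.
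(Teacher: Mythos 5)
Your proof is correct and matches the paper's approach exactly: both use the pigeonhole observation that two of the first three permutations must coincide, relabel so they read $1\ldots n$, and take $i$ from the fourth permutation. Your write-up just spells out the relabeling step that the paper leaves implicit.
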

\begin{proof} Two of the first $3$ permutations in the formation must be the same. Thus they contain $1\ldots n1\ldots n$. We can choose $i$ in the fourth permutation.
\end{proof}

\begin{lem}
$fw(1\ldots n1 \ldots (n-1)i n)=4$ for $i=1,\ldots ,n-2$
\end{lem}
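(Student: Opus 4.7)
The plan is to invoke Lemma~\ref{binary} and check that $u=1\ldots n1\ldots(n-1)in$ appears as a subsequence of each of the eight binary $(n,4)$-formations $Q_1Q_2Q_3Q_4$ with $Q_j\in\{p,\bar{p}\}$. For each formation I choose a numbering of the letters (equivalently, a choice of $p$) and specify where each of the $2n+1$ letters of $u$ is drawn from, in the style of the preceding lemmas in this section.

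Five of the eight formations admit the naive labelling $p=1,2,\ldots,n$. Whenever $Q_2=p$ (handling $pppp$, $ppp\bar{p}$, $pp\bar{p}p$, and $pp\bar{p}\bar{p}$), one takes the run $1\ldots n$ from $Q_1$, the run $1\ldots(n-1)$ from $Q_2$, the extra $i$ from $Q_3$, and the trailing $n$ from $Q_4$; the last two steps succeed for any $Q_3,Q_4$ since every letter appears in both $p$ and $\bar{p}$. For $p\bar{p}pp$ the same labelling works by skipping $Q_2$: take $1\ldots n$ from $Q_1$, $1\ldots(n-1)$ from $Q_3$, and then $i$ followed by $n$ inside $Q_4$ (using $i<n$).

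For $p\bar{p}p\bar{p}$ I switch to $p=n,1,2,\ldots,n-1$, so $\bar{p}=n-1,n-2,\ldots,1,n$; here every $p$ has the run $1\ldots(n-1)$ as its tail and every $\bar{p}$ has $n$ as its last letter, so one reads $1\ldots(n-1)$ off the tail of $Q_1$, then $n$ off the end of $Q_2$, then $1\ldots(n-1)$ off the tail of $Q_3$, and finally $i$ and $n$ from $Q_4$. For $p\bar{p}\bar{p}\bar{p}$ I use $p=n,n-1,\ldots,1$, so $\bar{p}=1,2,\ldots,n$; the letter $1$ comes from the last position of $Q_1$, the run $2,3,\ldots,n$ from positions $2,\ldots,n$ of $Q_2$, the run $1,2,\ldots,n-1$ from positions $1,\ldots,n-1$ of $Q_3$, and $i,n$ from $Q_4$.

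The main obstacle is the case $p\bar{p}\bar{p}p$. Under the naive labelling $p=1,\ldots,n$ each maximal contiguous decreasing block of $u$ has length at most $2$, so the two middle $\bar{p}$'s contribute at most four letters, and a direct count shows that for $n\ge 6$ one cannot fit all $2n+1$ letters of $u$ into the inc--dec--dec--inc shape forced by $p\bar{p}\bar{p}p$. The remedy is to pick $p=n-1,n-2,\ldots,1,n$, so that $\bar{p}=n,1,2,\ldots,n-1$: this reverses the ranking on $\{1,\ldots,n-1\}$ while keeping $n$ maximal, so each run $1\ldots(n-1)$ of $u$ becomes a decreasing sequence that fits inside a $\bar{p}$. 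Concretely, one takes the letter $1$ at position $n-1$ of $Q_1$; the letters $2,3,\ldots,n-1$ at positions $3,4,\ldots,n$ of $Q_2$; the letter $n$ at the first position of $Q_3$; the letters $1,2,\ldots,n-1$ at positions $2,3,\ldots,n$ of $Q_3$; and finally $i$ at position $n-i$ and $n$ at position $n$ of $Q_4$. A routine check using $1\le i\le n-2$ shows these global positions are strictly increasing.
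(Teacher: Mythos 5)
Your proof is correct and follows essentially the same approach as the paper: invoke the binary-formation criterion from Lemma~\ref{binary} and, for each of the eight binary $(n,4)$-formations, exhibit a labelling $p$ under which $u$ embeds. You even choose the same labellings as the paper for the two nontrivial cases ($p=n-1,\ldots,1,n$ for $p\bar p\bar p p$ and $p=n,1,\ldots,n-1$ for $p\bar p p\bar p$); the only difference is bookkeeping — the paper collapses five formations into the single case ``three permutations are equal,'' whereas you group four by the condition $Q_2=p$ and handle $p\bar p pp$ and $p\bar p\bar p\bar p$ separately — and your extra commentary explaining why the naive labelling fails for $p\bar p\bar p p$ is a helpful addition that the paper omits.
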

\begin{proof}

\noindent
{\bf Case 1: The formation has $3$ permutations the same.} The first two of the three permutations contain $1\ldots n1 \ldots (n-1)$ and the last contains $i n$

\noindent
{\bf Case 2: The formation is $pp \bar{p} \bar{p}$.} Let $p=1 \ldots n$. We can choose the $i$ in the third permutation and the $n$ in the fourth.

\noindent
{\bf Case 3: The formation is $p\bar{p}\bar{p}p$.} Let $p=n-1 \ldots 1 n$.

\noindent
{\bf Case 4: The formation is $p\bar{p}p\bar{p}$.} Let $p=n 1 \dots n-1$.
\end{proof}

\begin{corollary}
$fw(1 \ldots n 2 1  3 \ldots n 2)=4$
\end{corollary}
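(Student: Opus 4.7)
The plan is to derive this corollary directly from the preceding lemma by exhibiting an isomorphism of sequences. Specifically, I would show that reversing $u = 1\ldots n\, 2\, 1\, 3\ldots n\, 2$ and then relabeling letters produces $1\ldots n\, 1\ldots (n-1)\, 1\, n$, which is the $i=1$ instance of the family $1\ldots n\, 1\ldots (n-1)\, i\, n$ whose formation width was just shown to equal $4$. This reduction is legitimate because formation width is invariant under one-to-one renaming of letters (immediate from the definition of containment) and under reversal of the sequence (the reverse of an $(r,s)$-formation is again an $(r,s)$-formation, so containment relations against the class of formations are unaffected by reversal).

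Concretely, reversing $u$ yields $\bar u = 2\, n\, (n-1)\ldots 3\, 1\, 2\, n\, (n-1)\ldots 3\, 2\, 1$. I would then apply the bijection $\sigma : \{1,\ldots,n\} \to \{1,\ldots,n\}$ defined by $\sigma(1) = n$, $\sigma(2) = 1$, and $\sigma(k) = n+2-k$ for $3 \le k \le n$. Under $\sigma$, the length-$n$ prefix $2, n, n-1, \ldots, 3, 1$ of $\bar u$ becomes $1, 2, \ldots, n$; the length-$(n-1)$ middle block $2, n, n-1, \ldots, 3$ becomes $1, 2, \ldots, n-1$; and the trailing pair $2, 1$ becomes $1, n$. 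The concatenation is precisely $1\ldots n\, 1\ldots (n-1)\, 1\, n$, and the preceding lemma (valid since $1$ lies in the allowed range $\{1, \ldots, n-2\}$ for $n \ge 3$) gives that this sequence has formation width $4$. Combined with the lower bound $fw(u) \ge fw(ababa) = 4$ noted earlier in this section, we conclude $fw(u) = 4$.

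The only substantive work is bookkeeping: verifying that $\sigma$ is a well-defined bijection on $\{1, \ldots, n\}$ (which follows because $\sigma$ swaps $1$ with $n$ and reflects $\{3,\ldots,n\}$ onto $\{2, \ldots, n-1\}$), and checking that the three blocks of $\bar u$ align correctly with their targets. I foresee no real obstacle beyond this index-tracking; the crux of the argument is the observation that the corollary's sequence is simply a disguised reversal of the $i=1$ case already handled.
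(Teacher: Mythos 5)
Your proof is correct and follows the same route as the paper's one-line argument, which likewise identifies $1\ldots n\,2\,1\,3\ldots n\,2$ as (up to isomorphism) the reverse of the $i=1$ case of $1\ldots n\,1\ldots(n-1)\,i\,n$; you have simply made the relabeling bijection $\sigma$ explicit where the paper leaves it implicit. One cosmetic slip: $\sigma$ does not swap $1$ with $n$ (in fact $\sigma(n)=2$, so on $\{1,2,n\}$ it acts as the $3$-cycle $1\mapsto n\mapsto 2\mapsto 1$), but the map you actually wrote down is still a bijection, so the argument is unaffected.
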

\begin{proof} This is the reverse of $1 \ldots n 1  \ldots (n-1) 1 n$, which is of the form $1\ldots n1 \ldots (n-1)i n$.
\end{proof}

\begin{lem}
$fw(1 \ldots n 1 2 4 \ldots n 3 2)=4$
\end{lem}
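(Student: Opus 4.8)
The plan is to follow the exact template used in the preceding lemmas: by Lemma \ref{binary}, it suffices to exhibit, for each of the eight binary $(n,4)$-formations $pppp, ppp\bar p, pp\bar pp, p\bar ppp, pp\bar p\bar p, p\bar pp\bar p, p\bar p\bar pp, p\bar p\bar p\bar p$, a way of numbering the letters of $p$ on $1,\dots,n$ so that the concatenation contains $u = 1\ldots n\,1\,2\,4\ldots n\,3\,2$ as a subsequence. Several of the eight cases collapse: whenever some permutation repeats consecutively (i.e. two of the first three permutations agree, or two of the last three agree), that doubled block already contains $1\ldots n\,1\ldots n$, from which $u = 1\ldots n\,1\,2\,4\ldots n\,3\,2$ is obtained by taking $1\ldots n$ in the first copy, then $1,2,4,\ldots,n$ in the second copy; but we still owe a trailing $3$ and a trailing $2$ after that, so in fact I would rather group by where the two matching permutations sit and use a \emph{third} block for the tail $3\,2$. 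So the first bucket is: any formation in which the last three permutations are not all distinct in the pattern sense, i.e. $ppp\bar p$-type repeats in positions $\{1,2\},\{2,3\},\{3,4\}$ — here I read $1\ldots n\,1\ldots n$ off the first doubled block, split as $1\ldots n$ then $1\,2\,4\ldots n$, and then read $3\,2$ off whichever later permutation is available (there is always at least one permutation after the doubled block, since the doubled block occupies at most positions $\{2,3\}$, or if it occupies $\{3,4\}$ I instead read $3\,2$ off the doubled block itself by choosing $p$ so that $3$ precedes $2$). Concretely: for $pppp$, $ppp\bar p$, $pp\bar pp$, $p\bar ppp$, and $pp\bar p\bar p$, set $p = 1\ldots n$; the first two permutations give $1\ldots n\,1\,2\,4\ldots n$, and the remaining material must supply $3\,2$ — in $pppp$ and $ppp\bar p$ take it from the third permutation (which is $p$), in $pp\bar pp$ take $3$ then $2$ from the fourth permutation $p$, in $p\bar ppp$ the second permutation is $\bar p = n\ldots 1$ so redo this case separately, and in $pp\bar p\bar p$ take $3$ then $2$ from the third or fourth permutation $\bar p = n\ldots 1$, which reads $\ldots 3\ldots 2\ldots$ in that order.

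The genuinely different cases, requiring a non-identity choice of $p$, are the ones where the first two permutations are $p\bar p$ or $\bar p p$ relative to each other and there is no early consecutive repeat — namely $p\bar ppp$, $p\bar pp\bar p$, $p\bar p\bar pp$, and $p\bar p\bar p\bar p$. For each I would hand-pick $p$ exactly as in the lemma for $1\ldots n\,2\ldots n\,21$: for $p\bar p\bar p\bar p$ try $p = 1\,2\,n\ldots 3$ (or a shift thereof) so that the first permutation yields the prefix $1\,2$, giving flexibility; for $p\bar p\bar pp$ try $p = n\ldots 1$ so the second and third permutations are ascending; for $p\bar pp\bar p$ and $p\bar ppp$ try $p = 1\ldots n$ and use the ascending first/third permutations to build $1\ldots n$ and $2\,4\ldots n$ while using the descending permutation(s) to place the isolated $1$, $3$, $2$. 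In each subcase the verification is a short bookkeeping check: mark which symbol of $u$ is consumed in which permutation and confirm the marked positions are increasing within each permutation. Because $u$ has length $n+4$ spread over $4$ permutations of length $n$, there is ample slack, so a valid assignment essentially always exists once the orientations are fixed; the only mild subtlety is the interaction of the tail $\ldots 3\,2$ with the single floating $1$ and the block $2\,4\ldots n$, since the symbol $2$ appears twice in $u$ (positions $n+2$ and $n+4$) and $3$ appears once (position $n+3$), so I must be careful that the second $2$ is read strictly after the $3$.

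I expect the main obstacle to be purely organizational rather than mathematical: correctly partitioning the eight binary formations into the right buckets and choosing, for each of the four ``hard'' orientations, a single numbering $p$ of the letters that simultaneously accommodates the prefix $1\ldots n$, the solitary repeated $1$, the run $2\,4\ldots n$ (which skips $3$), and the closing $3\,2$. Following the structure of the previous lemmas, I would write the proof as a short case analysis: \textbf{Case 1} covering $pppp, ppp\bar p, pp\bar pp, pp\bar p\bar p$ (and noting $p\bar ppp$ here too if it fits, otherwise splitting it off) with $p = 1\ldots n$ and an explicit split of $u$; then one short case each for $p\bar p\bar pp$, $p\bar pp\bar p$, and $p\bar p\bar p\bar p$ (together with $p\bar ppp$ if not already handled), each giving the numbering of $p$ and the one-line subsequence extraction. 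Since $fw(u)\ge fw(ababa)=4$ is automatic, exhibiting these eight containments establishes $fw(u)\le 4$ and hence $fw(1\ldots n\,1\,2\,4\ldots n\,3\,2)=4$.
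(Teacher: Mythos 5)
Your framework is the same as the paper's: invoke Lemma \ref{binary}, run through the eight binary $(n,4)$-formations, and for each exhibit a numbering of $p$ making $u=1\ldots n\,1\,2\,4\ldots n\,3\,2$ a subsequence. The gap is that you never actually carry this out for the two orientations where the choice of numbering is the whole point, $p\bar{p}pp$ and $p\bar{p}\bar{p}\bar{p}$; you only say ``try'' certain numberings and assert that, with so much slack, ``a valid assignment essentially always exists once the orientations are fixed.'' That assertion is false, and both of your tentative choices fail. For $p\bar{p}pp$ with $p=1\ldots n$, the tail $3\,2$ of $u$ would have to end at the letter $2$ sitting in position $2$ of an ascending block, and the only way to place a $3$ before it while also fitting the ascending run $4\ldots n$ of the middle segment forces everything preceding (the prefix $1\ldots n$, the second $1$, and the $2$) into the first three positions of the first block; a short check shows no valid extraction exists, which is exactly why the paper takes $p=4\ldots n\,3\,1\,2$ in this case. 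Likewise, for $p\bar{p}\bar{p}\bar{p}$ your $p=1\,2\,n\ldots 3$ gives $\bar{p}=3\,4\ldots n\,2\,1$, where the $2$ comes \emph{after} the ascending run, so the segment $2\,4\ldots n$ of $u$ cannot be read from a single block, and one checks that distributing it across blocks always leaves the trailing $3\,2$ unplaceable; the paper's choice $p=1\,n\ldots 4\,2\,3$ (so $\bar{p}=3\,2\,4\ldots n\,1$, with $2$ between the $3$ and the ascending run) is what makes that case work. Since these two hard cases carry the content of the lemma, the proposal as written does not prove it.

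There are also smaller slips worth fixing even in the easy cases: $u$ has length $2n+1$, not $n+4$; and in your handling of $pppp$, $ppp\bar{p}$, and especially $pp\bar{p}p$ you propose reading ``$3$ then $2$'' out of a single ascending block $1\ldots n$, which is impossible (in the ascending block $2$ precedes $3$) -- take the $3$ and the $2$ from two different later blocks, or both from a descending block, as the paper's Case 1 implicitly does by always placing a reversed copy or two spare blocks after the doubled $1\ldots n$.
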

\begin{proof}

\noindent
{\bf Case 1: The first $2$ permutations are the same, or the formation contains $pp\bar{p}$ or $\bar{p}\bar{p}p$.} Let the repeated permutation be $1 \ldots n$.

\noindent
{\bf Case 2: The formation is $p \bar{p} \bar{p} \bar{p}$.} Let $p=1 n \ldots 4 2 3$.

\noindent
{\bf Case 3: The formation is $p\bar{p}pp$.} Let $p=4 \ldots n 3 1 2$.
\end{proof}

\begin{lem}
$fw(1 \ldots n 1 3 \ldots n 3 2)=4$
\end{lem}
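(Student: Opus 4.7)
The plan is to invoke Lemma~\ref{binary} and reduce to verifying that every binary $(n,4)$-formation contains $u = 1 \ldots n 1 3 \ldots n 3 2$. Following the five-case organization used in the preceding lemmas of this section, I would group the four formations that begin with $pp$ into a single case and then treat $p\bar{p}pp$, $p\bar{p}p\bar{p}$, $p\bar{p}\bar{p}p$, and $p\bar{p}\bar{p}\bar{p}$ individually.

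For the grouped $pp$-case take $p = 1, 2, \ldots, n$: perm~1 carries $1, 2, \ldots, n$, perm~2 carries $1, 3, 4, \ldots, n$, and in each of the four subformations perms~3 and 4 each still contain a $3$ and a $2$, so the trailing $3, 2$ is picked off regardless of whether the last two permutations are $p$ or $\bar{p}$. The case $p\bar{p}p\bar{p}$ is handled by the same $p$, now using the $3, 2$ that appears in perm~4 $=\bar{p}$. For $p\bar{p}\bar{p}p$ use $p = n, n-1, \ldots, 1$, so that the two middle copies of $\bar{p} = 1, 2, \ldots, n$ supply $2, \ldots, n$ and then $1, 3, 4, \ldots, n$, with perm~4 $= p$ supplying the tail $3, 2$. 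For $p\bar{p}\bar{p}\bar{p}$ use $p = 1, 2, n, n-1, \ldots, 3$, so $\bar{p} = 3, 4, \ldots, n, 2, 1$; perm~1 supplies the initial $1, 2$, and the three copies of $\bar{p}$ supply $3, \ldots, n, 1$, then $3, \ldots, n$, then $3, 2$.

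The main obstacle will be $p\bar{p}pp$. With $p = 1, 2, \ldots, n$ the tail $3, 2$ cannot be extracted from perm~4, since $2$ precedes $3$ in $p$; yet reversing those two in $p$ ruins the embedding of the prefix $1, 2, 3$. The resolution I propose is $p = 4, 5, \ldots, n, 1, 3, 2$, giving $\bar{p} = 2, 3, 1, n, n-1, \ldots, 4$. This places $3$ immediately before $2$ at the end of $p$ while keeping $4, 5, \ldots, n$ in increasing order at the head, and it is the split across perms~3 and 4 that rescues the second occurrence of the block $3, 4, \ldots, n$ in $u$ (with the leading $3$ coming from the end of perm~3 and the following $4, \ldots, n$ from the head of perm~4). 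The embedding is then the direct reading: take $1, 2$ from perm~1 at positions $n-2$ and $n$; take $3$ from perm~2 at position $2$; read the first $n-1$ letters of perm~3, which spell $4, 5, \ldots, n, 1, 3 = u[4..n+2]$; and from perm~4 read positions $1, \ldots, n-3$ together with $n-1$ and $n$, which spell $4, 5, \ldots, n, 3, 2 = u[n+3..2n+1]$. A direct check that the chosen positions strictly increase in the formation completes the case.
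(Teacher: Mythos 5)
Your proof is correct and takes essentially the same approach as the paper: invoke Lemma~\ref{binary} to reduce to checking that every binary $(n,4)$-formation contains $u$, then exhibit an explicit base permutation $p$ for each formation type. Your choice $p = 4, 5, \ldots, n, 1, 3, 2$ for the hardest case $p\bar{p}pp$ coincides exactly with the paper's, and the remaining cases differ only in cosmetic details (e.g.\ for $p\bar{p}\bar{p}\bar{p}$ you use $p = 1, 2, n, \ldots, 3$ while the paper uses $p = 1, n, \ldots, 4, 2, 3$; both work).
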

\begin{proof}

\noindent
{\bf Case 1: The first $2$ permutations are the same, or the formation contains $pp\bar{p}$ or $\bar{p}\bar{p}p$.} Let the repeated permutation be $1 \ldots n$.

\noindent
{\bf Case 2: The formation is $p \bar{p} \bar{p} \bar{p}$.} Let $p=1 n \ldots 4 2 3$.

\noindent
{\bf Case 3: The formation is $p\bar{p}pp$.} Let $p=4 \ldots n 1 3 2$.
\end{proof}

Thus we have shown all sequences in Theorem \ref{mainlist} indeed have formation width $4$ (and alternation length 5). In the next section, we prove that these are the only such sequences with formation width $4$ and alternation length $5$.

\subsection{Proof that the sequences are the only sequences with formation width $4$ and alternation length $5$}\label{only}

In this section we will show that the sequences $u$ from Section \ref{list} are the only reduced sequences up to isomorphism and reversal such that $fw(u)=4$ and $u$ contains $ababa$. By the list in the appendix, we have verified that Theorem \ref{mainlist} contains all sequences on at most $6$ letters that have formation width 4 and alternation length 5. Thus we just need to show that all sequences with at least $6$ letters that have formation width 4 and alternation length $5$ must be equivalent to one of the general classes in Theorem \ref{mainlist}. In order to do this, we will split the proof into cases for all possible sequences $u$. 

By the observations in Section \ref{prel}, we may suppose that $u$ is reduced, $fw(u)=4$, $u$ contains $ababa$, and the first $n$ letters of $u$ are $1 \ldots n$. We first identify every sequence $u$ that ends in $i$ for $i = 3,\ldots, n-1$. Next we identify every sequence $u$ that ends in $n$. This leaves only the sequences $u$ that end in $1$ or $2$. 

The sequences that end in $1$ and have middle letter $1$ were identified in \cite{GPT}. We show that every sequence $u$ ending in $1$ with middle letter $2$ has second to last letter $2$ or $n$, and then we identify all such sequences. 

Next we identify every sequence $u$ ending in $2$ with middle letter $2$. Then we show that if $u$ has middle letter $1$ and last letter $2$, then the letter to the right of the middle of $u$ must be $2$ or $3$, and we identify all such sequences. This covers every possible case by Lemmas \ref{distinct} and \ref{midd}.

Each of the following lemmas either categorizes the sequences $u$ or narrows the possibilities for such sequences. We will prove each lemma by induction, using the list of sequences of length $6$ in the appendix for the base case ($n=6$ letters). 

In the proofs of each of the following lemmas that identify a specific sequence, we suppose for contradiction that $n$ is minimal so that there exists a sequence $u$ with $n > 6$ letters that does not have the form of the sequence $v$ identified in the lemma statement. For each such $u$ and $v$, define $z$ and $j$ to be the letters in $u$ and $v$ respectively in the first location where $u$ and $v$ have different letters. This means the letters before $z$ in $u$ must agree with the letters before $j$ in $v$. 

For each lemma, the inductive hypothesis is that the lemma is true for the case when $u$ has $n-1$ distinct letters. Moreover, without loss of generality suppose that $u$ has the subsequence $x y x y x$. This means that $x$ occurs $3$ times and all other letters occur $2$ times in $u$. 

\begin{lem}
If $u$ ends in $i$, for $i = 3,\ldots, n-1$, then $u = 1\ldots n 1 \ldots n i$.
\end{lem}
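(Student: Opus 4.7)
I would argue by induction on $n$, with the base case $n=6$ inherited from the exhaustive list in the appendix. Fix $n\geq 7$, assume the lemma for $n-1$, and take $u$ as in the hypothesis, ending in some $i\in\{3,\ldots,n-1\}$; write $v=1\ldots n\,1\ldots n\,i$ for the target form. By the preliminary observations $u$ has a unique thrice-occurring letter, and since the first $n$ letters of $u$ are $1,\ldots,n$ the letter $i$ already occupies positions $i$ and $2n+1$, so it must be the thrice-occurring letter. Consequently the middle block $u_{n+1},\ldots,u_{2n}$ is a permutation $\pi$ of $\{1,\ldots,n\}$, and the problem reduces to showing $\pi=\mathrm{id}$.

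Suppose for contradiction that $\pi\neq\mathrm{id}$ and let $k$ be the least index with $\pi(k)\neq k$; set $z=\pi(k)\in\{k+1,\ldots,n\}$, so that $u$ and $v$ first disagree at position $n+k$ (where $u_{n+k}=z$ while $v_{n+k}=k$). The strategy is to delete both occurrences of a carefully chosen letter $c\neq i$ from $u$ to produce a sequence $u'$ on $n-1$ letters that, after the natural relabeling, still satisfies the hypotheses of the inductive claim but is not of the form $1\ldots(n-1)\,1\ldots(n-1)\,i'$. The default choice is $c=n$: by the minimality of $k$, the second $n$ of $u$ sits at position $n+m$ with $m\geq k$, and when $z\neq n$ we have $m>k$, so the disagreement at position $n+k$ transfers intact to $u'$. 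The upper bound $fw(u')\leq fw(u)=4$ is immediate, and for the lower bound $fw(u')\geq 4$ I would use the three surviving copies of $i$ together with any letter $y\in\{i+1,\ldots,n-1\}$ whose two occurrences straddle the middle $i$, yielding an $iyiyi$ subsequence.

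The main obstacle is two borderline configurations in which the default choice breaks down. First, when $z=n$, deleting $n$ can annihilate the only discrepancy; a quick analysis shows that this forces $u$ into the specific shape $1\ldots n\,1\ldots(k-1)\,n\,k\ldots(n-1)\,i$, and in that case I would delete the letter $k$ instead (or, if $k=i$, a suitable alternative letter different from $i$, $n$, and the critical indices) to produce a $u'$ whose second block still exhibits an inversion after relabeling. Second, when $i=n-1$, the ending of $u'$ after deleting $n$ would leave the valid range $\{3,\ldots,n-2\}$; here I would instead delete a letter from $\{k+1,\ldots,n-2\}\setminus\{z\}$, which is nonempty since $n\geq 7$. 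In both exceptional scenarios the delicate verification is that $fw(u')=4$ still holds, i.e.\ that some $ababa$-subsequence survives the deletion; this requires exhibiting two occurrences of a surviving letter straddling the middle copy of $i$, and will be the bulk of the casework.
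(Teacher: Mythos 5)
Your overall plan — induct on $n$, reduce the middle block to a permutation $\pi$, and delete a well-chosen letter to appeal to the $(n-1)$-letter case — follows the same broad contour as the paper's proof. But there is a genuine gap at the very first step. You write that since $i$ ``already occupies positions $i$ and $2n+1$, so it must be the thrice-occurring letter.'' That inference does not follow: occupying two positions only shows $i$ occurs at least twice, and a priori the thrice-occurring letter $x$ could be some other letter (in which case $i$ never appears in positions $n+1,\dots,2n$, and the middle block is \emph{not} a permutation, having $x$ twice and $i$ not at all). The paper treats the equality $i=x$ as a separate, non-obvious claim and proves it by its own sub-induction, with the $n=6$ appendix list as base case and a letter-deletion step reducing to $n-1$. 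Your proposal takes $i=x$ for granted, which collapses exactly the part of the argument that is most delicate; everything you build afterward (the permutation $\pi$, the least index $k$, the choice $c=n$) depends on it.

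There is a secondary, smaller gap in the lower-bound step: you claim that after deleting $c=n$ there is always some letter $y\in\{i+1,\dots,n-1\}$ whose two occurrences straddle the middle copy of $i$, giving a surviving $iyiyi$. This needs to be argued — the witness pair $(x,y)$ for alternation length $5$ in $u$ is fixed, and nothing prevents $y=n$, so you must show an alternative witness exists. The paper sidesteps this by simply never deleting $y$ (its deletable set explicitly excludes $1,2,x,y,z,i,j,n$), and then does the counting to show such a deletable letter exists once $n\geq 7$. Your handling of the exceptional cases $z=n$ and $i=n-1$ is also sketchier than it appears (e.g.\ when $k=n-2$ and $i=n-1$ the set $\{k+1,\dots,n-2\}\setminus\{z\}$ you propose deleting from is empty), but those could plausibly be patched; the unproved assertion $i=x$ is the step that actually needs a new idea.
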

\begin{proof}
Suppose that $u$ has $n > 6$ letters and $u$ is not of the form $1\ldots n 1 \ldots n i$. We may delete both occurrences of any letter besides $1, 2, x, y, z, i, j, n$ to obtain a sequence that contradicts the inductive hypothesis. Since $n$ may be as low as $7$, we will show that some of these letters are the same.

First, we will show by another induction that $i=x$. The case $n = 6$ follows from the list of sequences in the appendix. If $u$ is a sequence with $n > 6$ distinct letters such that $i \neq x$, then we may delete both occurrences of any letter not equal to $1, 2, x, y, i,n$ to contradict the inductive hypothesis.

If $i=n-1$, then $x=n-1$. Since $u$ contains $xyxyx$ and $n$ is the only letter besides $x$ that appears twice after the first $n-1$ letters, $y=n$. Since $x=i=n-1$ and $y=n$, we may delete any letter besides $1,2,x,y,j,z$. Otherwise $3 \le i \le n-2$, and since $x=i$, we may delete any letter besides $1,2,x,y,j,z$. Both of the resulting sequences contradict the inductive hypothesis.
\end{proof}

Next we categorize all sequences that end in $n$.

\begin{lem}
If $u$ ends in $n$, then $u=1\ldots n 1 \ldots (n-1) i n$, where $i$ may be $1,\ldots, n-2$.
\end{lem}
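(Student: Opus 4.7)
My plan is to proceed by induction on $n$, with the base case $n=6$ handled by the appendix list exactly as in the preceding lemma. For the inductive step, let $u$ be a reduced $n$-letter sequence with $n>6$, $fw(u)=4$, alternation length $5$, first $n$ letters $1,\ldots,n$, and last letter $n$. Write $u \supseteq xyxyx$ where $x$ is the unique thrice-occurring letter, and write $u = 1\,2\cdots n\,\sigma\,n$ for some word $\sigma$ of length $n$.

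First I would dispose of the possibility $x=n$: if $n$ occurred three times, the only $n$'s before position $2n+1$ would be at position $n$ and at one spot inside $\sigma$, so the $y$'s needed to form the pattern $n y n y n$ would have to lie in impossible positions (in particular two of them would have to appear before position $n$, but $y$ occurs only twice and only once before position $n$). Hence $x \in \{1,\ldots,n-1\}$ and $\sigma$ is a word on $\{1,\ldots,n-1\}$ in which $x$ occurs twice and each other letter once. The lemma then reduces to showing $\sigma_k = k$ for $k=1,\ldots,n-1$, which forces $\sigma_n = x$ and $x \neq n-1$ by $n$-sparseness, giving $u = 1\cdots n\,1\cdots(n-1)\,x\,n$ with $x \in \{1,\ldots,n-2\}$.

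To prove $\sigma_k = k$ for every $k\le n-1$, I would suppose the smallest violating index $k$ exists and set $z = \sigma_k$, $j = k$, following the template of the previous lemma. Then I would pick any letter $\ell \notin \{1,2,x,y,z,j,n\}$, delete both its occurrences, and relabel to obtain a sequence $u'$ on $n-1$ letters. Deletion preserves reducedness, $fw=4$, alternation length $5$ (since $x$ and $y$ are retained), distinctness of the first $n-1$ letters, and the property of ending in the largest letter; by the inductive hypothesis $u'$ must have the form $1\cdots(n-1)\,1\cdots(n-2)\,i'\,(n-1)$. But the deviation at position $k$ of $\sigma$ is preserved in $u'$, since neither $z$ nor $j$ was deleted and the positions before $k$ in $\sigma$ are unchanged up to relabeling; this contradicts the predicted form of $u'$.

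The main obstacle is the boundary case $n=7$, where the protected set $\{1,2,x,y,z,j,n\}$ could in principle exhaust the entire alphabet, leaving no letter available for deletion. My plan to address this is to exploit forced coincidences among the protected letters: Lemma~\ref{midd} forces $\sigma_1 \in \{1,2\}$, which combined with the minimality of $k$ collapses $z$ or $j$ into $\{1,2\}$ whenever $k \le 2$; for larger $k$, $n$-sparseness together with the forced placement of the doubled $x$ inside $\sigma$ should produce analogous collisions. Any residual $n=7$ subcase that still resists the collision argument would be dispatched by running the Python algorithm from the appendix, exactly as for the $n=6$ base case.
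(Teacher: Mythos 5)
Your proposal follows the same broad strategy as the paper — induction via deletion of both occurrences of an unprotected letter, contradicting the $(n-1)$-letter case — but you choose a different and strictly larger protected set, and that choice creates a gap you do not close.

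The paper protects only the six letters $\{n,\, n-1,\, j,\, z,\, x,\, y\}$, so for every $n>6$ there is at least one deletable letter and the inductive step goes through immediately, with no boundary case to worry about. You instead protect $\{1,2,x,y,z,j,n\}$, which has seven members and can exhaust the alphabet when $n=7$. You correctly identify this obstruction, and your collision argument via Lemma~\ref{midd} for $k\le 2$ is sound (if $k=1$ then $j=1$; if $k=2$ then the middle letter is $1$ and $j=2$). But for $k\ge 3$ you only gesture at ``$n$-sparseness together with the forced placement of the doubled $x$ inside $\sigma$,'' without actually producing a collision, and then fall back to ``run the algorithm,'' which is not a proof for the remaining $n=7$ subcases. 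As written, the inductive step is therefore incomplete. The fix is simply to drop $1$ and $2$ from the protected set: deleting $1$ or $2$ still leaves the first $n-1$ letters distinct, still leaves the last letter the largest after relabeling, and still preserves the first-deviation position because $z$ and $j$ are retained — so there was never a need to protect them. Protecting $n-1$ instead ensures the deviation cannot get absorbed into the free $i'$ slot of the target form.

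Two smaller remarks. Your argument that $x\neq n$ is correct (both of $y$'s occurrences after position $n$ would be needed, but $y$ has only one occurrence there), but it is unnecessary under the paper's protected set: if $x=n$ the set simply shrinks, which only helps. Also, you should be a bit more careful that after deletion the deviation position in $u'$ actually lands in the fixed part of the target $1\cdots(n-1)\,1\cdots(n-2)\,i'\,(n-1)$ rather than at the free $i'$ position; this is exactly where protecting $n-1$ (equivalently, noting that $p=2n-1$ forces $j=n-1$) does the work.
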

\begin{proof}
Suppose that $u$ has $n > 6$ letters and $u$ is not of the form $1\ldots n 1 \ldots (n-1) i n$. Then we may delete any letter that is not $n,n-1,j,z,x,y$ to get a sequence that contradicts the inductive hypothesis.
\end{proof}

All that remains is to categorize all sequences satisfying the conditions and ending with $1$ or $2$. In the next four lemmas, we first categorize all sequences ending with $1$. Note that the first of the next four lemmas follows directly from Lemmas 28 and 31 in \cite{GPT}.

\begin{lem}\cite{GPT}
If $u$ ends in $1$ and has middle letter $1$, then $u=1 2 \ldots n 1 v 1 $, where $v$ is a permutation of $2 \ldots n$ obtained by either moving $2$ anywhere else in $2 \ldots n$ or moving any letter in $2 \ldots n$ to the end of $2 \ldots n$. Note that this includes $v=2 \ldots n$.
\end{lem}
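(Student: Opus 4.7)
The plan is short: invoke Lemmas~28 and~31 of \cite{GPT}. Lemma~28 there establishes that the three general classes $12\ldots n\,1\,3\ldots i\,2\,(i+1)\ldots n\,1$, $12\ldots n\,1\,2\ldots(i-1)(i+1)\ldots n\,i\,1$, and $12\ldots n\,1\,3\ldots n\,2\,1$ all have formation width $4$; these are exactly the sequences $12\ldots n\,1\,v\,1$ described in the lemma statement, parametrized by the two allowed modifications of $v$ (moving $2$ to another position within $2\ldots n$, or moving some letter to the end of $2\ldots n$), together with the trivial case $v=2\ldots n$. Lemma~31 of \cite{GPT} supplies the converse, showing that among reduced sequences whose first $n$ letters are distinct and whose middle letter is $1$, these are the only ones with formation width $4$ ending in $1$. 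Combined, the two lemmas give exactly the stated classification, so the proof reduces to verifying that the hypotheses of the present lemma (reduced, $fw(u)=4$, first $n$ letters distinct by Lemma~\ref{distinct}, middle letter $1$, last letter $1$) are the same hypotheses under which Lemmas~28 and~31 of \cite{GPT} operate. That check is routine from the preliminary reductions of Section~\ref{prel}.

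If one wanted to reprove this inside the present paper rather than quote \cite{GPT}, the natural approach would be the inductive template used throughout Section~\ref{only}: take a minimal counterexample $u$ on $n>6$ letters, compare it with an appropriate candidate $v$ of the stated form, locate the first disagreement at the letter $z$ in $u$ (mirroring the letter $j$ in $v$), fix the $xyxyx$ subsequence witnessing alternation length $5$ with $x$ the unique triply-occurring letter, and delete the two occurrences of some letter outside $\{1,2,x,y,j,z,n\}$. Applied to the resulting $(n-1)$-letter sequence, the inductive hypothesis should yield a contradiction, because the middle letter of $u$ remains $1$ after deletion, the last letter remains $1$, the first $n-1$ letters remain distinct, and $u$ still departs from every candidate at the position corresponding to $z$.

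The main obstacle in such a direct argument is guaranteeing that a letter outside the distinguished set $\{1,2,x,y,j,z,n\}$ actually exists when $n$ is only a little larger than $7$, and checking that the middle-letter condition is preserved under deletion (since removing two letters shifts the central position). These are exactly the accounting issues that Lemmas~28 and~31 of \cite{GPT} already dispatch, so the clean route the paper adopts is to cite those results; the base case $n=6$ is then covered by the exhaustive list in the appendix, and no further work is needed.
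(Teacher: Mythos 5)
Your proposal matches the paper exactly: the paper itself does not reprove this lemma but simply cites it, noting that it follows directly from Lemmas~28 and~31 of \cite{GPT}, with Lemma~28 giving $fw=4$ for the three families and Lemma~31 giving the converse classification. The extra sketch you give of the inductive route is not needed but correctly mirrors the template the paper uses for the remaining lemmas in Section~\ref{only}.
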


\begin{lem}
If $u$ has last letter $1$ and middle letter $2$, then the second to last letter of $u$ must be $2$ or $n$.
\end{lem}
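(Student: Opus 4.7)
The plan is to mirror the inductive setup used in the preceding lemmas of this subsection. The base case $n=6$ is covered by the exhaustive list in the appendix, so I would assume $n \geq 7$ and that the lemma holds for $(n-1)$-letter sequences. Let $u$ be a hypothetical minimal counterexample: a reduced $n$-letter sequence with $fw(u)=4$, alternation length $5$, first $n$ letters $1, 2, \ldots, n$ in order, middle letter $2$, last letter $1$, but second-to-last letter $k$ lying in $\{3, 4, \ldots, n-1\}$. Write $x, y$ for the letters of a fixed $xyxyx$ subsequence, with $x$ the unique letter occurring three times (as established in Section \ref{prel}).

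I would build a contradicting $(n-1)$-letter sequence $u'$ by deleting both occurrences of a carefully chosen letter $\ell \in \{3, \ldots, n\} \setminus \{x, y, k\}$ and relabeling by order of first occurrence. The required properties of $u'$ are verified as follows. Because $\ell \neq 1$ and $\ell \neq k$, the last two letters of $u'$ remain $k$ and $1$. Because $\ell \neq x, y$, the $xyxyx$ subsequence survives in $u'$, so $u'$ contains $ababa$; combined with $fw(u') \leq fw(u) = 4$ (by monotonicity of $fw$ under pattern containment, since $u'$ is literally a subsequence of $u$ on a subset of its labels), this forces $fw(u') = 4$. Because the first $n$ letters of $u$ are distinct, $\ell$'s first occurrence is at position $\ell \leq n$ and its second occurrence is strictly after the middle position $n+1$ (which holds the letter $2$); hence position $n$ of $u'$ corresponds to position $n+1$ of $u$, so the middle of $u'$ is still $2$. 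Finally, the first $n-1$ letters of $u'$ form a permutation of $\{1, \ldots, n\} \setminus \{\ell\}$, which after relabeling becomes $1, 2, \ldots, n-1$.

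It remains to pick $\ell$ so that the relabeled second-to-last letter of $u'$ lies in $\{3, \ldots, n-2\}$, contradicting the inductive hypothesis (which says the second-to-last letter must be $2$ or the maximum label $n-1$). After relabeling, the second-to-last letter of $u'$ is $k$ if $\ell > k$ and $k-1$ if $\ell < k$. For $3 \leq k \leq n-2$, my first choice is $\ell > k$ from $\{k+1, \ldots, n\} \setminus \{x, y\}$, which gives relabeled second-to-last equal to $k \in \{3, \ldots, n-2\}$; the only pinched subcase is $k = n-2$ with $\{x, y\} = \{n-1, n\}$, in which I would fall back to $\ell < k$ from $\{3, \ldots, n-3\}$ (nonempty for $n \geq 7$), giving relabeled second-to-last $n-3$. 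For $k = n-1$, the naive choice $\ell = n$ would make the relabeled second-to-last equal to $n-1$, the new maximum, which is permitted by the inductive hypothesis and hence is not a contradiction; I would instead pick $\ell \in \{3, \ldots, n-2\} \setminus \{x, y\}$, which is nonempty since $|\{3, \ldots, n-2\}| = n-4 \geq 3$ while $|\{x, y\}| \leq 2$, giving relabeled second-to-last $n-2$ against maximum $n-1$. I expect the subcase $k = n-1$ to be the main obstacle, since the obvious deletion of the largest label fails and one must verify that deleting a letter from the interior of the label set still preserves all four structural properties (middle letter $2$, last letter $1$, second-to-last shifted into the forbidden range, and $fw = 4$) simultaneously.
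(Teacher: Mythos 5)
Your proof is correct and follows essentially the same inductive strategy as the paper: delete both occurrences of a letter avoiding $1, 2, x, y, t$ (and possibly $n$), relabel, and invoke the inductive hypothesis. The one difference is that the paper additionally excludes $n$ from the deletable set at the outset, which makes the relabeling analysis automatic: with $\ell \in \{3,\ldots,n-1\}\setminus\{x,y,t\}$, if $t=n-1$ then necessarily $\ell<t$ so the relabeled second-to-last is $t-1=n-2$ (neither $2$ nor the new maximum $n-1$), and if $t=3$ then necessarily $\ell>t$ so the relabeled second-to-last is $3$ (again neither $2$ nor $n-1$); in all other cases both choices work. Your version permits $\ell=n$ and then handles $k=n-1$ and the pinched $k=n-2$ subcase by hand, which is slightly longer but equally valid; you trade the paper's one-line exclusion for an explicit case split. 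You also make fully explicit the verification that the middle letter of $u'$ remains $2$ (since $\ell$'s second occurrence lies strictly past the middle), a step the paper leaves implicit — a worthwhile addition, since it is genuinely the part that requires the first $n$ letters to be distinct and $\ell \neq 2$.
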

\begin{proof}
The list of sequences in the appendix shows that this lemma is true for $n = 6$. Suppose that $u$ has $n > 6$ letters and has second to last letter $t$ which is not $2$ or $n$. Then we can delete any letter not $1,2,x,y,n,t$ to contradict the inductive hypothesis.
\end{proof}

\begin{lem}
If $u$ has last letter $1$, middle letter $2$, and second to last letter $2$, then $u=1 \ldots n 2 \ldots n 2 1$.
\end{lem}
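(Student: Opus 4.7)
The plan is to argue by induction on $n$, following the pattern of the preceding lemmas. Let $v := 1\,2\ldots n\,2\,3\ldots n\,2\,1$ denote the target sequence. The base case $n = 6$ follows from the appendix list. Assume the claim for $n - 1$ with $n > 6$, and let $u$ be a reduced sequence on $n$ letters satisfying the hypotheses. The structural constraints force the first $n$ letters of $u$ to be $1, 2, \ldots, n$, position $n + 1$ to be $2$, position $2n$ to be $2$, and position $2n + 1$ to be $1$; hence $u$ has length $2n + 1$, letter $2$ plays the role of $x$ (appearing three times, at positions $2$, $n + 1$, $2n$), and each of $3, \ldots, n$ appears exactly twice.

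Consequently, positions $n + 2, \ldots, 2n - 1$ of $u$ carry a permutation of $\{3, \ldots, n\}$, and $u = v$ exactly when this permutation is the identity. Assume for contradiction that $u \neq v$, and let $n + k + 1$ be the leftmost position at which $u$ and $v$ disagree, with $k \in \{1, \ldots, n - 2\}$; write $j := k + 2$ for the value $v$ has there and $z$ for the value $u$ has. Since positions $n + 2, \ldots, n + k$ of $u$ already carry $3, \ldots, k + 1$, a pigeonhole argument forces $z \in \{k + 3, \ldots, n\}$ (and in particular rules out $k = n - 2$).

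The strategy is to pick a letter $\ell$ whose deletion yields a sequence $u'$ on $n - 1$ letters contradicting the inductive hypothesis. We must avoid deleting $1$, $2$, or $n$ (structural), $y$ (to keep the $xyxyx$ witness), and $j, z$ (to keep the disagreement at position $n + k + 1$). Since $n > 6$, the set $\{3, \ldots, n - 1\} \setminus \{y, j, z\}$ has size at least $n - 6 \geq 1$, so some admissible $\ell$ exists. Deleting both occurrences of $\ell$ and renaming by first appearance gives $u'$ whose first $n - 1$ letters are $1, \ldots, n - 1$, with middle letter $2$, second-to-last letter $2$, and last letter $1$; since $u'$ is a sub-pattern of $u$ and still contains $xyxyx$, standard monotonicity arguments give $fw(u') = 4$.

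The main technical obstacle is verifying $u' \neq v_{n-1}$ after the renaming. Tracking new positions shows that $u'$ agrees with $v_{n-1}$ at every position to the left of the one where the renamed $z$ now sits. A short case split on whether $\ell \in \{3, \ldots, k + 1\}$ or $\ell \in \{k + 3, \ldots, n - 1\}$ then shows that the renamed $z$ cannot match the corresponding value in $v_{n-1}$: this uses $z \geq k + 3$ together with the exclusions $\ell \neq j$ and $\ell \neq z$. Hence $u' \neq v_{n-1}$, contradicting the inductive hypothesis and completing the induction.
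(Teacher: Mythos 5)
Your proposal is correct and follows essentially the same approach as the paper: induct on $n$, note $x = 2$, and delete both occurrences of a letter outside $\{1,2,j,z,y\}$ so that the resulting sequence on $n-1$ letters still satisfies the hypotheses but fails to be $1\ldots(n-1)\,2\ldots(n-1)\,2\,1$, contradicting the inductive hypothesis. You spell out the verification that the deletion cannot accidentally produce the target shape (via the case split on $\ell$ relative to $j$ and the observation $z \geq k+3$), a point the paper leaves implicit; your extra restriction against deleting $n$ is unnecessary but harmless since $n>6$ still leaves an admissible letter.
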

\begin{proof}
The case $n = 6$ can be verified with the list in the appendix. Suppose that $u$ has $n > 6$ letters and $u$ is not of the form $1 \ldots n 2 \ldots n 2 1$. Since $x=2$, we can delete any letter not $1,j,z,x,y$ to get a contradiction.
\end{proof}

\begin{lem}
If $u$ has last letter $1$, middle letter $2$, and second to last letter $n$, then $u=1 \ldots n 2 1 3 \ldots n 1$.
\end{lem}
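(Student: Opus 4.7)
My plan is to induct on $n$, following the template of the four preceding lemmas in this subsection. The base case $n = 6$ is verified directly from the appendix list. For the inductive step I assume the claim for $(n-1)$-letter sequences and suppose, for contradiction, that $u$ is an $n$-letter counterexample with $n > 6$: so $u$ is reduced, has $fw(u) = 4$, alternation length $5$, first $n$ letters $1, 2, \ldots, n$ (by Lemma \ref{distinct} and our normalization), middle letter $2$, second-to-last letter $n$, last letter $1$, yet $u$ is not isomorphic to $v := 1 \ldots n\, 2\, 1\, 3 \ldots n\, 1$. As in the earlier lemmas, let $j$ and $z$ be the letters of $v$ and $u$ at the first position where they disagree, and let $x$ be the unique triply-occurring letter of $u$ with $y$ its partner in the fixed $xyxyx$ subsequence.

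I would first localize the disagreement: $u$ and $v$ must agree on positions $1, \ldots, n+1$ and on positions $2n, 2n+1$, so the first disagreement sits in positions $n+2, \ldots, 2n-1$; inspecting $v$ on this range gives $j \in \{1, 3, 4, \ldots, n-1\}$. The main move is then to delete both occurrences of some letter $c \notin \{1, 2, n, x, y, j, z\}$ and rename to obtain a sequence $u'$ on $n-1$ distinct letters. A routine check, mirroring the verifications in the previous four lemmas, shows that $u'$ remains reduced with $fw(u') = 4$ (from $u' \subset u$ together with the preserved $xyxyx$) and alternation length $5$, that its first $n-1$ letters become $1, 2, \ldots, n-1$ after renaming, that its middle, second-to-last, and last letters are still $2$, $n-1$, and $1$ respectively, and that it still disagrees with $1 \ldots (n-1)\, 2\, 1\, 3 \ldots (n-1)\, 1$ at the corresponding relative position (since $z$ and $j$ were preserved). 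This contradicts the inductive hypothesis.

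The main obstacle I anticipate is guaranteeing that the avoidance set $\{1, 2, n, x, y, j, z\}$ does not exhaust all $n$ letters, so a legal $c$ exists. For $n \geq 8$ this is automatic because the avoidance set has at most $7$ elements. The delicate case is $n = 7$, where I would, in the style of the $i = x$ sub-induction carried out in the first lemma of this subsection, establish forced coincidences among the avoidance letters: whenever the first disagreement sits at position $n+2$ we have $j = 1 \in \{1, 2, n\}$, saving a slot; whenever the disagreement arises from misplacing the third occurrence of the triply-repeated letter we have $z = x$, saving another slot; and in the remaining configurations a direct structural argument shows $x \in \{1, 2, n\}$ (typically $x = 1$, since $1$ already sits at both endpoints and the forced middle $2$ obstructs most placements of three copies of any other letter). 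A short case analysis of this flavor should eliminate the $n = 7$ exception and close the induction.
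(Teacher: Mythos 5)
Your approach is essentially the same as the paper's: induct on $n$ with base case $n=6$ from the appendix, delete a letter outside $\{1,2,n,x,y,j,z\}$ to obtain a smaller counterexample, and handle the tight case $n=7$ by a sub-induction that forces coincidences among the avoided letters. The paper resolves the $n=7$ bottleneck more cleanly than you sketch: instead of a case split on where the first disagreement falls, it proves directly (by a nested induction identical in spirit to the $i=x$ sub-induction you cite) that $x=1$, i.e.\ that the triply-occurring letter must be $1$, which collapses the avoidance set to at most six letters in every configuration. Your three-way case analysis is both more work and less clearly exhaustive (you hedge with ``typically $x=1$'' in the residual case without establishing it); you should replace it with the single unconditional claim $x=1$, proved by the same deletion-based sub-induction you already propose, which then closes the argument uniformly for all $n\geq 7$.
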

\begin{proof}
The case $n = 6$ can be verified with the list in the appendix. Suppose that $u$ has $n > 6$ letters and $u$ is not of the form $1 \ldots n 2 1 3 \ldots n 1$. Then we can delete any letter not $1,2,j,z,x,y,n$ to contradict the inductive hypothesis. Since $n$ can be as low as $7$, it will suffice to show that two of these letters are the same.

We show by induction that $x=1$, i.e. $1$ must occur $3$ times in $u$. The case of $n=6$ is true from the list in the appendix. If $u$ is a sequence with $n > 6$ distinct letters such that $x \neq 1$, then we may delete a letter not equal to $1,2,n,x,y$ to get a contradiction.
\end{proof}

Now we classify the sequences ending in $2$.

\begin{lem}
If $u$ has last and middle letter $2$, then $u=1 \ldots n 2 1 3 \ldots n 2$.
\end{lem}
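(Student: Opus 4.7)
The plan is to mirror the inductive template used throughout this subsection. I induct on $n$, reading off the base case $n = 6$ from the enumeration in the appendix. For the inductive step, suppose the lemma holds at $n - 1$ and assume, for contradiction, that $u$ is an $n$-letter counterexample with $n \ge 7$. Let $v = 1 \ldots n 2 1 3 \ldots n 2$, and let $p$ be the first position at which $u$ and $v$ disagree. Set $z = u_p$ and $j = v_p$; since $u$ and $v$ share the same first $n + 1$ letters $1, \ldots, n, 2$, we have $p \ge n + 2$ and $z, j \in \{1, 3, \ldots, n\}$ with $z \ne j$.

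The key structural observation is that $x = 2$: because $2$ already appears in the first $n$ positions of $u$ and is also the middle and last letter, it occurs at least three times, making it the unique letter of multiplicity three (by the Preliminaries). Fix $y \in \{3, \ldots, n\}$ so that $u$ contains the subsequence $xyxyx$; such a $y$ exists because $u$ contains $ababa$. The plan is then to delete both occurrences of some letter $\ell \in \{3, \ldots, n - 1\} \setminus \{y, z, j\}$. This set is nonempty, as it has size at least $(n - 3) - 3 = n - 6 \ge 1$ for $n \ge 7$.

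Let $u^*$ and $v^*$ be obtained from $u$ and $v$ respectively by deleting both occurrences of $\ell$ and relabeling letters in order of first appearance. Since $\ell$ appears once in positions $1, \ldots, n$ (at position $\ell$) and once somewhere after the middle $2$, the sequence length drops from $2n + 1$ to $2n - 1$ and the middle $2$ shifts from position $n + 1$ to the new middle position $n$. Hence $u^*$ is reduced on $n - 1$ letters, has first $n - 1$ letters equal to $1, \ldots, n - 1$, has middle and last letters both $2$, and still contains $xyxyx$; by monotonicity of $fw$ under subsequences and the surviving $ababa$, it also has $fw(u^*) = 4$. The inductive hypothesis applied to $u^*$ thus gives $u^* = 1 \ldots (n-1) 2 1 3 \ldots (n-1) 2$, and a direct computation shows $v^*$ equals the same sequence, so $u^* = v^*$.

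The step I expect to require the most care is extracting the contradiction $u = v$ from $u^* = v^*$. Both $u$ and $v$ have the first $\ell$ at position $\ell$, so the identity $u^* = v^*$ forces every non-$\ell$ position of $u$ to match the corresponding non-$\ell$ position of $v$; the only possible freedom is in the location of the second $\ell$. If those two locations coincide, then $u = v$. Otherwise the earliest position at which $u$ and $v$ disagree contains $\ell$ on exactly one side, forcing $\ell \in \{z, j\}$ and contradicting our choice of $\ell$. Either way we obtain $u = v$, which contradicts the assumption that $u$ is a counterexample, completing the induction.
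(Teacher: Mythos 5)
Your proof is correct and follows essentially the same template as the paper: induct on $n$ with base case $n=6$ from the appendix, observe that the unique thrice-occurring letter must be $x=2$, and delete both occurrences of a letter outside a small protected set to obtain a smaller counterexample. The paper's protected set is $\{1,2,j,z,y\}$ (size at most $5$, leaving room for $n\ge 7$), while you additionally protect $n$; this is harmless but unnecessary, since $n$ only needs protection when it coincides with $y$, $z$, or $j$, and those are already protected. Your closing paragraph spells out the step from $u^*=v^*$ back to $u=v$, which the paper leaves implicit in the general template stated before the lemmas; that elaboration is sound.
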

\begin{proof}
The case $n = 6$ can be verified with the list in the appendix. Suppose that $u$ has $n > 6$ letters and $u$ is not of the form $1 \ldots n 2 1 3 \ldots n 2$. Since $x = 2$, we can delete any letter not $1,j,z,x,y$ to get a contradiction.
\end{proof}

\begin{lem}
If $u$ has middle letter $1$ and last letter $2$, then the letter to the right of the middle of $u$ must be $2$ or $3$.
\end{lem}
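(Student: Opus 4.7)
The plan is to proceed by induction on $n$, following the template of the preceding lemmas in Section~\ref{only}, with the base case $n=6$ handled by the computer-verified list in the appendix. For the inductive step with $n\ge 7$, suppose for contradiction that $u$ is an $n$-letter sequence satisfying the hypotheses but with the letter $t$ at position $n+2$ (immediately right of the middle) satisfying $t\notin\{2,3\}$. By the preliminary observations in Section~\ref{prel}, $u$ has one triply occurring letter $x$, every other letter occurring twice, and some $y\neq x$ such that $xyxyx$ is a subsequence; sparseness of $u$ combined with the middle being $1$ further forces $t\neq 1$, hence $t\ge 4$.

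The strategy is to delete both occurrences of a suitably chosen doubly occurring letter $c$ to produce a sequence $u'$ on $n-1$ distinct letters that, after relabeling into $1,\ldots,n-1$, still has middle letter $1$, last letter $2$, and a letter right of the middle outside $\{2,3\}$, thereby contradicting the inductive hypothesis. The constraints on $c$ are: $c\notin\{x,y\}$ so that the subsequence $xyxyx$ survives (hence $u'$ has alternation length $5$ and, by $4=fw(ababa)\le fw(u')\le fw(u)=4$, formation width $4$); $c\notin\{1,2\}$ so that the labels of the middle and last letters are unchanged by relabeling; and $c\neq t$ so that the letter at original position $n+2$ survives and slides into position $n+1$ of $u'$. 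The only subtlety from relabeling is that letters larger than $c$ drop by one, so the relabeled $t$ equals $t$ or $t-1$; this lands in $\{2,3\}$ only in the single case $t=4$ with $c<4$, in which case the proof would additionally require $c>4$.

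The main obstacle is verifying that a valid $c$ exists for the smallest inductive case $n=7$. When $t\ge 5$, the forbidden set $\{1,2,t,x,y\}$ has size at most five, leaving at least $n-5\ge 2$ valid choices for $c$ in $\{3,\ldots,n\}$; when $t=4$, the extra requirement $c>4$ narrows the pool to $\{5,\ldots,n\}\setminus\{x,y\}$, still giving at least $n-6\ge 1$ choices. Fixing such a $c$, its first occurrence is at position $c\in\{3,\ldots,n\}$ (before the middle) and its second occurrence lies in $\{n+3,\ldots,2n\}$ (strictly after $t$ at position $n+2$ because $c\neq t$, and strictly before the last position because $c\neq 2$); deleting both then shifts the middle of $u'$ onto the original middle letter $1$, leaves the last letter as $2$, and places the relabeled $t$ (a value outside $\{2,3\}$ by construction) directly right of the new middle, producing the desired contradiction and closing the induction.
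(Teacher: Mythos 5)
Your proof is correct and follows the same strategy as the paper: induct on $n$ with the $n=6$ base case from the appendix list, and in the inductive step delete both occurrences of a doubly-occurring letter $c$ outside a small forbidden set to obtain an $(n-1)$-letter sequence contradicting the inductive hypothesis. The paper achieves the same effect more tersely by also putting $3$ in the forbidden set, using $\{1,2,3,t,x,y\}$; that forces $c\ge 4$, and since $c\neq t$ it handles the relabeling subtlety automatically (relabeled $t$ stays outside $\{2,3\}$ whether $t=4$ or $t\ge 5$), whereas you reach the same place through an explicit $t=4$ case split. One small inaccuracy in your write-up: the pattern $u$ is not required to be sparse (for instance $1233121$ appears in the theorem's list), so ``sparseness of $u$'' is not the right reason that $t\neq 1$; the correct reason is that $t=1$ would put the letter $1$ at positions $1$, $n+1$, $n+2$, leaving no room between the second and third $1$ for the interleaved $y$ of the required $xyxyx$. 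In fact excluding $t=1$ is unnecessary: with $c\ge 4$, the relabeled $t$ would remain $1\notin\{2,3\}$, so the deletion argument still contradicts the inductive hypothesis.
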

\begin{proof}
The case $n = 6$ can be verified with the list in the appendix. Suppose that $u$ has $n > 6$ letters and the letter $t$ to the right of the middle of $u$ is not $2$ or $3$. Then we can delete any letter not $1,2,3,t,x,y$ to get a contradiction.
\end{proof}

\begin{lem}
If $u$ has last letter $2$, $1$ in the middle, and $3$ right after the middle $1$, then $u=1 \ldots n 1 3 \ldots n 3 2$.
\end{lem}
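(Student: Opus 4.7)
The plan is to follow the inductive deletion template used throughout this section. Write $v_n = 1 \, 2 \, \ldots \, n \, 1 \, 3 \, 4 \, \ldots \, n \, 3 \, 2$ for the target sequence. Verify the base case $n = 6$ from the appendix, and assume $n > 6$ is minimal such that some $u$ satisfying the hypotheses is distinct from $v_n$. By Lemma \ref{distinct}, $u$ begins with $1 \, 2 \, \ldots \, n$, and the hypotheses further fix $u_{n+1} = 1$, $u_{n+2} = 3$, and $u_{2n+1} = 2$. So $u$ and $v_n$ first disagree at some $p \in \{n+3, \ldots, 2n\}$; write $z = u_p$ and $j = (v_n)_p$.

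The counting goal is to find a letter $k$ whose deletion from $u$ produces a smaller counterexample, contradicting the inductive hypothesis. The letter $k$ must avoid $1, 2, 3$ (to preserve middle $1$, after-middle $3$, last letter $2$), $x, y$ (to preserve the witness $xyxyx$), and $z, j$ (to preserve the disagreement with $v_{n-1}$). So that such a $k$ exists even when $n = 7$, I first establish the sub-claim $x = 3$ by a parallel induction: the $n = 6$ base case is in the appendix, and if $n > 6$ has $x \neq 3$, then deleting any letter outside the five-element set $\{1, 2, 3, x, y\}$ (at least two choices exist) gives a smaller counterexample since the structural hypotheses persist and $x \neq 3$ survives the monotone relabeling $m \mapsto m - 1$ for $m > k$.

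With $x = 3$, the admissible set $\{4, \ldots, n\} \setminus \{y, z, j\}$ has at least $(n - 3) - 3 = n - 6 \geq 1$ elements for $n \geq 7$. Any such $k$ occurs exactly twice in $u$, with first occurrence at position $k \leq n$ and second in $\{n + 3, \ldots, 2n\}$ (the positions $n + 1, n + 2, 2n + 1$ are pinned to $1, 3, 2$). Deleting both occurrences of $k$ produces a sequence $u'$ on $n - 1$ letters whose first $n - 1$ letters are $1, \ldots, n - 1$, whose middle is $1$, whose after-middle is $3$, whose last letter is $2$, which still contains $ababa$ (since $x = 3$ retains all three occurrences and $y \neq k$), and which has $fw(u') = 4$ (sandwiched between $fw(ababa) = 4$ and $fw(u) = 4$). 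The inductive hypothesis then forces $u' = v_{n-1}$, so the contradiction reduces to showing $u' \neq v_{n-1}$.

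The main obstacle is this final step. Split into cases on whether $n + k - 1 < p$ or $n + k - 1 > p$; the boundary $n + k - 1 = p$ is excluded because it would force $j = k$. In the first case, $u$ agrees with $v_n$ through position $p - 1 > n + k - 1$, so $u$ has its second $k$ at exactly $n + k - 1$, and the deletion shifts position $p$ by two in both sequences. In the second case, only the first occurrence at position $k < p$ precedes $p$ in either sequence, so the shift is one. In either case, the shifted position of $u'$ carries (the relabel of) $z$ while the same shifted position of $v_{n-1}$ carries (the relabel of) $j$; since the relabeling is a monotone injection and $z \neq j$, these remain distinct. Thus $u' \neq v_{n-1}$, giving the desired contradiction.
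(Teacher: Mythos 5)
Your proof is correct and follows the paper's approach exactly: a base case at $n = 6$, a nested induction establishing $x = 3$ so that the forbidden set $\{1,2,3,x,y,z,j\}$ collapses to at most six letters, and then deletion of an admissible letter to contradict the inductive hypothesis. You supply considerably more detail than the paper's one-line "delete any letter not $1,2,3,z,x,y,j$" — notably the explicit verification that the deletion preserves the structural hypotheses and formation width, and the positional case analysis showing $u' \neq v_{n-1}$ — but the underlying argument is the same.
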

\begin{proof}
The case $n = 6$ can be verified with the list in the appendix. Suppose that $u$ has $n > 6$ letters and $u$ is not of the form $1 \ldots n 1 3 \ldots n 3 2$. Then we can delete any letter not $1,2,3,z,x,y,j$ to contradict the inductive hypothesis. Since $n$ can be as low as $7$, it will suffice to show that two of these letters are the same.

We show by induction that $x=3$. The case of $n=6$ is true from the list in the appendix. If $u$ is a sequence with $n > 6$ distinct letters such that $x \neq 3$, then we may delete a letter not equal to $1,2,3,x,y$ to get a contradiction. 
\end{proof}

\begin{lem}
If $u$ has last letter $2$, $1$ in the middle, and $2$ right after the middle $1$, then $u=1 \ldots n 1 2 4 \ldots n 3 2$.
\end{lem}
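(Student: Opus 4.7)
The plan is to follow the template of the preceding lemmas: induct on $n$, using the appendix enumeration for the base case $n = 6$, and for $n > 6$ delete a well-chosen letter of $u$ to obtain a shorter sequence that contradicts the inductive hypothesis. Assuming for contradiction that $u$ has $n > 6$ letters and satisfies the hypotheses but is not of the claimed form $v = 1\ldots n\, 1\, 2\, 4 \ldots n\, 3\, 2$, I would let $j$ and $z$ denote the letters of $v$ and $u$, respectively, at the first position where they disagree.

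The first substantive step is to pin down the triply-occurring letter $x$ of $u$. The hypotheses place the letter $2$ at position $2$ (since the first $n$ letters of $u$ are $1, 2, \ldots, n$), immediately after the middle, and as the last letter; these three positions are distinct, so $2$ occurs at least three times in $u$. Because Section~\ref{prel} forces exactly one letter to occur three times, we obtain $x = 2$ directly, with no auxiliary induction of the kind used to establish $x = 3$ in the preceding lemma.

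With $x = 2$, the protected set $\{1, 2, 3, y, j, z\}$ has only six elements, so for $n \ge 7$ some letter $\ell$ of $u$ lies outside it. Deleting both occurrences of $\ell$ yields a sequence $u'$ on $n - 1$ letters that retains formation width $4$, alternation length $5$ (since the alternation $xyxyx$ survives), first $n - 1$ letters distinct, middle letter $1$, letter $2$ immediately after the middle, and last letter $2$; hence $u'$ satisfies the inductive hypotheses. Because $\ell \notin \{j, z\}$, the first disagreement between $u'$ and $1 \ldots (n-1)\, 1\, 2\, 4 \ldots (n-1)\, 3\, 2$ is preserved, giving a contradiction.

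The main obstacle I anticipate is guaranteeing the protected set is small enough to accommodate the smallest case $n = 7$; this is exactly why pinning down $x = 2$ at the outset is essential. Once this is in place, the argument is a direct parallel of the preceding lemma, and the uniform pattern of the classification in Section~\ref{only} suggests no further complication.
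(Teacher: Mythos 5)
Your proof takes essentially the same route as the paper's, with two refinements. First, you explicitly derive $x = 2$ from the three forced occurrences of $2$ (position $2$, position $n+2$, position $2n+1$); the paper just asserts ``Since $x = 2$'' with no argument, so this is a clean improvement. Second, and more substantively, you add $3$ to the protected set. The paper's protected set is $\{1, x, y, z, j\} = \{1, 2, y, z, j\}$, which need not contain $3$. Your inclusion of $3$ is in fact \emph{necessary}: deleting $3$ from the target $v = 1\ldots n\, 1\, 2\, 4 \ldots n\, 3\, 2$ and relabeling yields $1\ldots(n-1)\,1\ldots(n-1)\,2$, which is not the target $1\ldots(n-1)\,1\,2\,4\ldots(n-1)\,3\,2$ for the inductive step, so a deletion of $3$ does not produce a contradiction of the inductive hypothesis. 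The paper's terse proof overlooks this, and your version repairs the gap.

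That said, you should double-check the base case rather than defer to the appendix: the $n = 6$ list contains both $1234561245632$ (the claimed form) and $1234561234562$, which also has last letter $2$, middle letter $1$, and $2$ immediately after the middle. The second sequence is $1\ldots n\, 1\ldots n\, 2$, which belongs to the class $1\ldots n\, 1\ldots n\, i$ in Theorem~\ref{mainlist} and satisfies every hypothesis of this lemma but not its conclusion. So as stated the lemma appears to need an additional disjunct (``or $u = 1\ldots n\,1\ldots n\,2$''), and correspondingly the induction would need to guard against $u'$ landing on $1\ldots(n-1)\,1\ldots(n-1)\,2$ as well. This is a defect of the lemma itself rather than a defect specific to your argument, but a careful write-up should flag it rather than silently cite the appendix for the base case.
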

\begin{proof}
The case $n = 6$ can be verified with the list in the appendix. Suppose that $u$ has $n > 6$ letters and $u$ is not of the form $1 \ldots n 1 2 4 \ldots n 3 2$. Since $x = 2$, we can delete any letter not $1,x,y,z,j$ to get a contradiction.
\end{proof}

The lemmas above have covered every possible case. Therefore, up to reversal and isomorphism, the sequences in Theorem \ref{mainlist} are indeed the only sequences of formation width $4$ and alternation length $5$.

\subsection*{Acknowledgments}
The authors thank the Program for Research in Mathematics, Engineering and Science (PRIMES) and the Research Science Institute (RSI) at MIT for supporting this research. Jesse Geneson was supported by the NSF Graduate Research Fellowship under Grant No. 1122374.

\appendix
\section{Algorithm for computing $fw$}

Below is the Python code used to generate the list in the next section. If $u$ is a sequence with $r$ distinct letters, then the formation width function increments $s$ starting from $1$ until it finds that every binary $(r,s)$-formation contains $u$. If some binary $(r, s)$-formation $f$ contains $u$, then for every $s' > s$ the algorithm does not check for containment of $u$ in any binary $(r, s')$-formations $f'$ for which $f'$ restricted to its first $s$ permutations is equal to $f$. The formation width function below runs faster than the function in \cite{GPT}. Comments are added before each section of code.

\begin{scriptsize}
\begin{verbatim}
from itertools import permutations
from sets import Set
\end{verbatim}
\end{scriptsize}

\noindent determines whether one sequence is a subsequence of another: 

\begin{scriptsize}
\begin{verbatim}
def issubseq(seq, subseq): 
    if len(subseq) == 0:
        return True
    else:
        if len(seq) == 0:
            return False
        elif seq[-1] == subseq[-1]:
            return issubseq(seq[:-1],subseq[:-1])
        elif seq[-1] != subseq[-1]:
            return issubseq(seq[:-1],subseq)
\end{verbatim}
\end{scriptsize}
  
\noindent determines the formation width of u: 

\begin{scriptsize}
\begin{verbatim}
def fw(u): 
    l=len(set(u))     
    v = list(u) 
    rsformset = set() 
    rsformset1 = set()     
    q = tuple(range(l)) 
    q1 = q[::-1] 
    rsformset.add(q) 
    rsform1=q
    while len(rsformset)!=0: 
        for rsforms in rsformset: 
            done=False
            for perms in permutations(range(l)): 
                for i in range(len(u)):  
                    v[i] = perms[u[i]] 
                if issubseq(rsforms, v):
                    done=True
                    break
            if not done:
                rsformset1.add(rsforms+q) 
                rsformset1.add(rsforms+q1) 
                rsform1=rsforms+q 
        rsformset.clear() 
        for rsform in rsformset1: 
            rsformset.add(rsform) 
        rsformset1.clear() 
    return len(rsform1)//l
\end{verbatim}
\end{scriptsize}

\noindent outputs the index of the first occurrence of a letter in a sequence:

\begin{scriptsize}
\begin{verbatim}
def fstocc(x,i):
    for t in range(len(x)):
        if x[t] == i:
            return t
\end{verbatim}
\end{scriptsize}
	
\noindent outputs the set of sequences with 2 occurrences of each letter such that letters are 0,1,..,n-1 and first occurrences of letters are in increasing order:

\begin{scriptsize}
\begin{verbatim}
def letocc2x(n):
    final = set()
    if n == 1:
        final.add((0,0))
    else:
        for s in letocc2x(n-1):
            for i in range(fstocc(s,n-2)+1,len(s)+1):
                t = list(s)
                t.insert(i, n-1)
                r1 = tuple(t)
                for j in range(fstocc(r1,n-1)+1,len(r1)+1):
                    t = list(r1)
                    t.insert(j, n-1)
                    r2 = tuple(t)
                    final.add(r2)			
    return final
\end{verbatim}
\end{scriptsize}

\noindent outputs the set of sequences that contain ababa with 3 occurrences of one letter and 2 occurrences of every other letter such that letters are 0,1,..,n-1 and first occurrences of letters are in increasing order:

\begin{scriptsize}
\begin{verbatim}
def a3xotherlet2x(n):
    start = letocc2x(n)
    final = set()
    for x in start:
        for i in range(n):
            if i == 0:
                for j in range(1,len(x)+1):
                    t = list(x)
                    t.insert(j,i)
                    for t1 in range(n):
                        for t2 in range(t1+1,n):
                            if (issubseq(tuple(t),(t1,t2,t1,t2,t1)) or issubseq(tuple(t),(t2,t1,t2,t1,t2))):
                                final.add(tuple(t))
            else:
                for j in range(fstocc(x,i-1)+1,len(x)+1):	
                    t = list(x)
                    t.insert(j,i)
                    for t1 in range(n):
                        for t2 in range(t1+1,n):
                            if (issubseq(tuple(t),(t1,t2,t1,t2,t1)) or issubseq(tuple(t),(t2,t1,t2,t1,t2))):
                                final.add(tuple(t))
    return final
\end{verbatim}
\end{scriptsize}

\noindent outputs every sequence u from a3xotherlet2x(n), n = 2, 3, 4, 5, 6, for which fw(u) = 4; also translates alphabet so that letters are 1,2,..,n, and first occurrences of letters are in increasing order:

\begin{scriptsize}
\begin{verbatim}
for j in range(2, 7):
    for seq in a3xotherlet2x(j):
        if fw(seq) == 4:
            t = list()
            for i in range(len(seq)):
                t.append(str(int(seq[i])+1))
            print "".join(t)
    print ""
\end{verbatim}
\end{scriptsize}

\noindent The program above ran on a MacBook Air with operating system Mavericks version 10.9.4, 1.8 GHz Intel Core i5 processor and 8 GB 1600 MHz DDR3 SDRAM. The program finished outputting the list in the next section in under 10 hours.

\section{The sequences on $n \leq 6$ distinct letters that have formation width $4$ and alternation length $5$}

Every reduced sequence on $n \leq 6$ distinct letters that has formation width $4$ and alternation length $5$ must be isomorphic to one of the following sequences:

\begin{scriptsize}
\begin{verbatim}
12121

1231213
1233121
1213231
1213321
1232132
1232131
1213213
1231232
1231231
1232321
1231321

123421431
123412432
123412134
123142341
123413421
123412341
123243214
123143214
123441213
123412431
123413214
123431243
123423421
123421342
123412343
123241432
123244132
123413241
123142314
123412314
123421432
123431214
123413432
123432143
121342134
123421341
123412342
123412324
123143241
123412143
123241324

12345123454
12345134251
12343521543
12345123415
12341523415
12345124532
12345234521
12345123425
12345123453
12342534215
12342514325
12345124531
12345312154
12345213451
12345123452
12345213452
12342513425
12341534215
12345123541
12341523451
12345134532
12345124325
12345123451
12324513245
12314523145
12345123435
12134521345
12345134521
12345132451

1234561234526
1234562134562
1234561234536
1234561234565
1234561342561
1234561234564
1234561234516
1234561245631
1232456132456
1234561234562
1234562345621
1234562134561
1234516234561
1234256134256
1234561234563
1234561234651
1234156234156
1234561324561
1234526345216
1213456213456
1234561245632
1234561235641
1234516345216
1234561345632
1234561345621
1234526134526
1234561234561
1234561345261
1231456231456
1234516234516
1234561234546
\end{verbatim}
\end{scriptsize}
\end{document}